\newcommand{\boldtheta}{\mbox{\boldmath $\theta$}}
\newtheorem{theorem}{\bf Theorem}
\newtheorem{prop}{\bf Proposition}
\newcommand{\ep}{\varepsilon}
\newcommand{\al}{\alpha}
\newcommand{\la}{\lambda}
\newcommand{\ri}{\rightarrow}
\title{\Large \bf   Seasonal fractional long-memory processes. A semiparametric estimation approach}
\author{\small VALDERIO A. REISEN$^{a}$\thanks{Corresponding author. E-mail: valderioanselmoreisen@gmail.com}
,    \small WILFREDO PALMA$^b$,   \small JOSU ARTECHE$^c$ \\ and \small BARTOLOMEU  ZAMPROGNO$^a$\\
\small $^a${\em Department of Statistics, CCE and PPGEA-CT-UFES, Vitoria - ES,
Brazil.}\\ \small  $^b${\em Department of Statistics, PUC-Chile} \\\small $^c${\em Department of Econometrics and  Statistics, Universidad del Pais Vasco-Spain}}
\date{\today}
\begin{document}
\maketitle

\begin{abstract}
This paper explores seasonal and long-memory time series properties by using the seasonal fractional ARIMA model when the seasonal data has one and two seasonal
periods and short-memory counterparts. The stationarity and invertibility parameter conditions are established for the model
studied. To  estimate the memory parameters, the method given in Reisen, Rodrigues and Palma (2006 a,b) is  generalized here to deal with a time series
with two seasonal fractional  long-memory parameters. The asymptotic properties are established and the accuracy of the method is investigated through
Monte Carlo experiments. The good performance of the estimator indicates that it can be an alternative competitive  procedure to  estimate
 seasonal  long-memory time series data.  Artificial and PM$_{10}$ series were considered as examples of applications of the proposed estimation method. \\

 \noindent {\em AMS Subject Classifications}: 62M10, 62M15, 60G18.\\
 \noindent {\em Keywords}: Fractional differencing, long memory, ARFIMA, seasonality.
\end{abstract}

\section{Introduction}

Time series exhibiting seasonal or cyclical characteristics are
very common in economics, hydrology, and many other disciplines.
As a consequence, several methodologies have been developed to
deal with these features. One of the most well known of these
tools is the class of seasonal autoregressive integrated moving
average (SARIMA) process. This model can describe many time series
containing a mixture of seasonal phenomena of different periods. It is well
known that many series may contain a persistent seasonal structure along with a long run trend.  However, the case of more than one seasonal (non-zero) structures  has not very much studied. In this direction,  the study of the ARFIMA process with seasonal periods becomes an interesting research topic, and this is the main motivation of this work.

Let $X_t$ $\equiv$ $\{X_t\}_{t \in \mathds{Z}}$  be a time series with a zero mean   and a  constant variance. A multiple seasonal ARIMA model can be written as follows:
\begin{equation}
\label{model1} \prod_{j=1}^M\phi_j(B)\nabla^{\textbf{d}}(B)X_t
=\prod_{\ell=1}^N\theta_\ell(B)\varepsilon_t,
\end{equation}
where $\nabla^{\textbf{d}}(B) \equiv \prod _{i=1}^k (1-B^{s_i})^{d_i}$,
 $M$ and $N$ are, respectively, the number of factors of the AR and MA components,
 $d{_i} \in \mathds{N}$ , $i= 1,...,k$, is the differencing
parameter  and $k$ is the number of differencing  factors, $s_i$
is the $i$-seasonal period, $BX_t$ = $X_{t-1}$, and
$\varepsilon_t$ is a white noise process with zero-mean and
variance
 $\sigma_{\varepsilon}^2$. $\phi_{j}(B),j=1,...,M,$ and $\theta_{\ell}(B),\ell=1,...,N,$ can
also be polynomials with seasonal effects. The stationarity and
invertibility properties of model (\ref{model1}) are established based on certain parameter
conditions. The  multiple seasonal ARIMA model belongs to a class of models with a general
difference operator given by
\begin{equation}\label{model2}
\nabla^{\textbf{d}}(B)\equiv\prod^L_{\ell=1}[(1-Be^{i\lambda_{\ell}})(1-Be^{-i\lambda_{\ell}})]^{d_{\ell}},
\end{equation}
where now $d{_{\ell}} \in \mathds{R}$ ($d{_{\ell}} > -1$) and it
is defined as the {\it fractionally differencing parameter} and $
\lambda_{\ell}$, $\ell =1,...,L$,
 are fixed frequencies in the range $[-\pi,\pi]$.  For suitable choices of the fractional parameters, time series models with filter given in (\ref{model2})   may have  a finite number of zeros or
singularities of order $d_{1}$,...,$d_{L}$ on the unit circle which allows the modeling of
long and short memory data containing seasonal periodicities. In the time domain, the usual definition of long memory
is the non-summability

\begin{equation*}
\sum_{h=0}^{\infty}|\gamma(h)|= \infty,
\end{equation*}
where $\gamma(h)$ is the autocovariance at lag $h$ of the process, whereas, in the frequency domain, this property is defined by the fact that the spectral density of the process becomes unbounded at some  frequency in [0,$\pi$].

A  time series with both seasonal and non-seasonal fractional
differencing parameters  has a spectral density specified  by
\begin{equation}
\label{model3}
f(\lambda) = f^{*}(\lambda)|\lambda|^{-2d}\prod^k_{i=1}\prod^{\xi_i}_{j=1}|\lambda -\lambda_{ij}|^{-2d_{i}},
\end{equation}
where $d{_{i}} \in \mathds{R}$ ($d{_{i}} > -1$), $\lambda \in (-\pi, \pi]$, $f^{*}(\lambda)$ is a continuous
function,   bounded above  and away from zero and
$\lambda_{ij}\neq 0$ are poles for $j =1,..., \xi_i$, $i=1,...,k$.
Processes  with a spectral density given by \eqref{model3} have
been discussed by Arteche and Robinson (1999), Giraitis and Leipus (1995), Leipus and Viano (2000),  Palma and Chan (2005) and Palma (2007, Ch.12), among
others, to model time series with seasonal and cyclical
long-memory behavior.

The main interest in models which have filter  (\ref{model2})
 and spectral density of the form (\ref{model3}) is related to the estimation of  fractional memory parameters
$d_{1}$,...,$d_L$. Ray (1993) used IBM product revenues to
illustrate the usefulness of modeling seasonal fractionally differenced ARMA models by allowing  two seasonal
fractional differencing parameters in the model, one at lag three and the other at lag twelve.
Other papers related to this topic are, for example,  Hassler (1994), Gray et al.
(1989,1994), Giraitis and Leipus (1995),   Ooms and France (2001), Woodward et al. (1998), Arteche and Robinson (2000), Palma and Chan (2005) and Reisen et
al. (2006a,b), Gil-Alana (2001) among others. Hassler (1994)  introduced the rigid
and flexible filters and an application of this methodology to the
economic activities in the Euro area is  discussed in Ferrara and
Guegan (2006). Arteche and Robinson (2000), Arteche (2002) and
Arteche and Velasco (2005)  dealt with robust semiparametric
estimators and testing procedures for the seasonal fractional
memory parameter. Woodward et al. (1998) extended the Gegenbauer
ARMA process (GARMA). 
Independently of these works, a time series model
for fitting long or short-memory data containing seasonal
periodicities was introduced by Giraitis and Leipus (1995) which
is called  the Fractionally Autoregressive Unit Circle Moving
Average model (ARUMA). These authors discussed the asymptotic
properties of the ARUMA model and  the estimation of its
parameters.

 Another equally relevant publication related to the asymptotic properties of
seasonal and periodic time series is the work by Viano et al. (1995). Reisen et al.
(2006a,b) dealt with the estimation of the seasonal ARFIMA model with long-memory
innovations  (SARFIMA $(0,d,0)\times (0,D,0)_s$) by using different estimation
procedures for the seasonal and non-seasonal memory parameters, that is, for $D$ and $d$
respectively. The estimators are based on the multilinear regression equation of  $\log
f(\cdot)$, where $f(\cdot)$ is the spectral density of the process  satisfying
$f(\lambda) \sim C^*|\lambda|^{-2(d+D)}$ as $\lambda \to 0$,  where $C^*$ is a positive
constant. Necessary conditions that guarantee the stationary and invertibility of the
model were also established. Through Monte Carlo experiments, they compared their
proposed methodology with other well-known parametric estimation procedures  such as the
Whittle and the maximum likelihood methods. The empirical evidence showed that the
multilinear regression estimators are very promising.

Most of the works referred to above deal with  the estimation of one seasonal
long-memory parameter. However, in many practical situations
the time series exhibits more than one seasonal component. In
order to explore these more complex situations, this paper focuses
on the estimation of models containing one and two seasonal
periods which encompass long and short-memory dependence structures. Specifically,  an ordinary least squares (OLS) procedure, based on a log-periodogram regression,
is proposed  to estimate all fractional  parameters simultaneously.

Let now  $X_t$ $\equiv$ $\{X_t\}_{t \in \mathds{Z}}$ be a zero-mean time series defined by
\begin{equation} \label{model4}
 \nabla^{\textbf{d}}(B)X_t\equiv(1-B^{s_1})^{d_1}(1-B^{s_2})^{d_2}X_t=\nu_t,
\end{equation}
where  the vector $\textbf{d}$ =$ (d_1,d_2)^{'}$, $s_1$ and $s_2$ are seasonal
periods and $d_1, d_2 \in \mathds{R}$ ($d{_{i}} > -1$) are their seasonal memory parameters,
respectively, and $\nu_t$ has a spectral density that satisfies
the following assumption.

{\bf Assumption 1:} The spectral density of $\nu_t $ satisfies as $\lambda \rightarrow 0$
\[f_{\nu}\left(\frac{2\pi k}{s'}+\lambda \right)=f_{\nu,k}+c_k|\lambda|^{\alpha_k}+O(|\lambda|^{\alpha_k +\iota})\]
for some $\iota>0$, $f_{\nu,k}$ $\equiv$ $f_\nu\left(\frac{2\pi k}{s'}\right)$, $k=0,1,...,[s'/2]$,  $s'=\max (s_1,s_2)= s_1 $ (without loss of generality) and $\alpha_k=\alpha_1$ for $k=0,s'/2$ (if $s'$ even) and $\alpha_k=\alpha_2$ otherwise.  If $\nu_t$ is a stationary and invertible ARMA process then $\al_1=2$, $\al_2=1$ and $\iota=1$. In this case, the process $X_t$ is usually defined as Seasonal ARFIMA (SARFIMA) model.

In the next section  some properties of the model given by \eqref{model4} are
discussed. In  particular, the   stationarity and invertibility conditions of Model \eqref{model4}  are established in Proposition 1. The estimation of these models is
discussed in Section 3, where the proposed  ordinary least squares (OLS) estimator is introduced. Some
asymptotic properties of these estimators are established in Theorems 1 and 2. For
example, the proposed OLS estimator is shown to be  asymptotically unbiased and
normally distributed. For comparison purposes,  the quasi-likelihood Fox-Taqqu
estimator (Fox and Taqqu (1986))  is  adapted here for Gaussian seasonal long-memory processes. The comparison between parametric and semiparametric approaches may appear to be unfair for the former class, in the case of a correct and complete parametric specification.  So,  the misspecification problem of the FT method is also included to be a part of the simulation section.  The finite sample performance of the proposed estimator is investigated in Section 4 while Section 5 discuss some applications. Final remarks are presented in Section 6.

\section{Model properties}
Let $X_t$ be a time series process defined by
(\ref{model4}). For simplicity, it is  assumed   that $s_1$ and $s_2$ are even
numbers.
The fractional $d_i$ difference is a generalization of the binomial
expression $(1-B)^d$ and it can be written as
\begin{eqnarray*}
(1-B^{s_i})^{d_i}= \sum_{k=0}^\infty \pi_k B^{ks_i},
\end{eqnarray*}
where
\begin{eqnarray*}
\pi_0 = 1, \qquad \pi _k = \frac{\Gamma (k-d_i)}{\Gamma ( k+1) \Gamma (-d_i)}, \quad
i=1,2,
\end{eqnarray*}
  and $\Gamma (\cdot)$ is the Gamma function.

 In the literature of the seasonal long-memory process, there are some specific time
series models of interest  obtained from the solution of the
general fractional operator (\ref{model2}) and the spectral
density of form \eqref{model3}. The specific filters and their
models are:
 (a) $(1-B)^d$  is the filter of the fractional integrated I($d$) process see, for
example, Hosking (1981), among others; (b) $(1-B)^{d_{1}}(1-B^s)^{d_{2}}$ is the filter in the SARFIMA process
that has been  explored in the literature of seasonal fractional ARMA model, see for example, Porter-Hudak (1990), Hassler (1993), Arteche (2002), Arteche and Robinson (2000) and Reisen et al. (2006a,b); (c)
$(1-\upsilon_1B-\cdots-\upsilon_vB^d)$  is the filter that belongs to the
ARUMA and  $k$-GARMA  processes, proposed  by Giraitis and Leipus (1995)  and
independently by Woodward et al. (1998), respectively;
(d) $(1-B^3)^{d_{1}}(1-B^{12})^{d_{2}}$ is the filter used by Ray (1993)
to model and forecast a monthly IBM revenue data under the restriction $d_1 +d_2
=1$.

Returning to our specific  model  of interest for $X_t$ given in \eqref{model4}, the
filter may be written as follows:
\begin{equation} \label{model5}
(1-B^{s_1})^{d_1}(1-B^{s_2})^{d_2}=\prod^2_{i=1}\prod^{\xi_i}_{j=0}
[{(1-Be^{i\lambda_{ij}})(1-Be^{-i\lambda_{ij}})}]^{d_{ij}},
\end{equation}
where $\lambda_{ij}=\frac{2\pi j}{s_i}$  $(j=0, 1,\ldots \xi_i)$
are the frequencies of the period $i$, $\xi_i= \frac{s_i}{2}$
($i=1,2$), and

%

 \begin{eqnarray}
 \label{model6}
             \begin{array}{ll}
              d_{1j}=d_1 & \mbox{ when $\lambda_{1j}\neq \lambda_{2j},0, \pi$;}\\
              d_{1j}=d_1/2 & \mbox{ when $\lambda_{1j}\neq \lambda_{2j}$ and $\lambda_{1j}=0, \pi$;}\\
              d_{1j}+d_{2j}=\frac {d_{1} + d_{2}}{2} & \mbox{ when $\lambda_{1j}= \lambda_{2j} =0, \pi$;}\\
               d_{1j}+d_{2j}= d_{1} + d_{2} & \mbox{ when $\lambda_{1j}= \lambda_{2j} \neq 0, \pi$;} \end{array}
\end{eqnarray}
and similarly when $i=2$


  It is easy to show that the filter \eqref{model5} is a particular case of the
operator  \eqref{model2} by  using the equality

 \begin{eqnarray*}
1-z^s= (1-z)(1+z)\prod\limits_{1}^{\frac{s}{2}-1}(1-z e ^{2\pi i k/s})(1-z e^{-2\pi i k/s}), \quad \text{for} \quad s \quad \text{even}. \\
\end{eqnarray*}
 When $s$ is an odd number, the term $(1+z)$ does not appear in the above equation.  From the expression of $1-z^s$, the following proposition is reached:
\begin{prop}
Let the process $X_t$  be a solution of  equation
\begin{equation} \label{model7}
X_t=(1-B^{s_1})^{-d_1}(1-B^{s_2})^{-d_2}\nu_t,
\end{equation}
where $\nu_t$ is a covariance stationary   ARMA process  ( $\nu_t = \frac{\Theta(B)}{\Phi(B)}\epsilon_{t})$, $\epsilon_{t}$ is an i.i.d Gaussian sequence with zero mean and variance $\sigma_{\epsilon}^2$ , and $d_i \in \mathds{R}$   is the fractional
parameter at seasonal period \(s_i\) for $i=1,2$. Then,
 \begin{itemize}
\item[(a)] The process $X_t$  is stationary and invertible if \(|d_1+d_2|<1/2\) and $|d_i|<1/2$, $i
=1,2$.
 \item[(b)] The spectral density of $X_t$  is given by
\begin{eqnarray*}
\begin{array}{ll}
f(\lambda) &= f_\nu(\lambda)\prod^2_{i=1}\prod^{\xi_i}_{j=0}|2
\sin (\frac{\lambda - \lambda_{ij}}{2})2\sin
(\frac{\lambda+\lambda_{ij}}{2})|^{-2d_{ij}}\\
           &= f_\nu(\lambda)\Big(2\sin \frac{\lambda s_1}{2}\Big)^{-2d_1}\Big(2\sin \frac{\lambda
           s_2}{2}\Big)^{-2d_2},
\end{array}
\end{eqnarray*}
where $f_\nu(\lambda)$ ( $ 0 \leq \lambda \leq \pi$) is the
spectral density of \(\nu_t\),  $\lambda_{ij}=\frac{2\pi
j}{s_i}$, $i=1,2$ and $j=0,1,...,\frac{s_i}{2}$, and $d_{ij}$
are given by (\ref{model6}).
\item[(c)] Assuming that  $\max\{d_{ij}\}>0$, the asymptotic autocovariance of  $X_t$,  $\gamma(h)= \mathds{E}(X_h X_0)$,
is given by
\begin{eqnarray*}
\gamma(h)=\sum_{i=1}^{2} \sum_{j=1}^{\xi_i}{a_{ij}}\mid h \mid^{2d_{ij}-1}(\cos h\lambda_{ij} +
o(1)) \quad\text{as} \quad k \to \infty,
\end{eqnarray*}
\noindent where
\begin{eqnarray*}
    a_{ij}=\begin{cases}
a_{ij}^{'}  & \lambda_{ij} =0,\pi \\
2a_{ij}^{'} & 0<\lambda_{ij} <\pi,
\end{cases}
\end{eqnarray*}
$d_{ij}$ is specified as in (\ref{model6})  and

\begin{eqnarray*}
a_{ij}^{'}=\mid\
\frac{\Theta(e^{-2\pi \lambda_{ij}})}{\Phi(e^{-2\pi \lambda_{ij}})}\mid^{2}
\frac{\sigma_{\epsilon}^2}{\pi}\Gamma(1-2d_{ij})\sin(d_{ij}\pi)D^2_{ij},
\end{eqnarray*}

 where
\begin{eqnarray*}
D_{ij}=\begin{cases}
\mid 2\sin\lambda_{ij}\mid^{-d_{ij}}\prod_{\ell \neq j }\mid 2(\cos\lambda_{ij}-\cos\lambda_{i\ell})
\mid^{-d_{i\ell }}, & 0<\lambda_{ij}<\pi, \\
\prod_{\ell\neq j}\mid 2(\cos\lambda_{ij}-\cos\lambda_{i\ell})\mid^{-d_{i\ell}}, &
\lambda_{ij}=0,\pi.
           \end{cases}
\end{eqnarray*}
\end{itemize}
\end{prop}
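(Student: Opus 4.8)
The plan is to establish the three parts in the order (b), then (a), then (c), since the explicit spectral density drives everything else.

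For part (b), I would note that \eqref{model7} exhibits $X_t$ as the output of the linear filter with transfer function $A(z)=(1-z^{s_1})^{-d_1}(1-z^{s_2})^{-d_2}$ applied to the ARMA input $\nu_t$, so under the stationarity conditions of (a) its spectral density is $f(\lambda)=|A(e^{-i\lambda})|^2 f_\nu(\lambda)$. Since $|1-e^{-i\lambda s}|=2|\sin(\lambda s/2)|$, this is immediately the second displayed line. The first line then follows by inserting the factorization $1-z^{s}=\prod_{j}(1-ze^{i\lambda_{ij}})(1-ze^{-i\lambda_{ij}})$ recorded just before the proposition, together with $|1-e^{-i(\lambda\mp\lambda_{ij})}|=2|\sin((\lambda\mp\lambda_{ij})/2)|$; the exponents $d_{ij}$ of \eqref{model6} are exactly the bookkeeping needed because the frequencies $0$ and $\pi$ enter $1-z^{s}$ with multiplicity one while a frequency shared by the two periods ($\lambda_{1j}=\lambda_{2j}$) collects contributions from both factors.

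For part (a), the key observation is that the process defined by \eqref{model7} is a well-defined covariance-stationary process exactly when its MA$(\infty)$ coefficients are square-summable, i.e. when $f\in L^1(-\pi,\pi]$, and it is invertible exactly when the AR$(\infty)$ coefficients are square-summable, i.e. when $1/f\in L^1(-\pi,\pi]$. Since $\nu_t$ is a stationary invertible ARMA process, $f_\nu$ is continuous on $[-\pi,\pi]$, bounded above and bounded away from $0$, so both integrability questions are decided by the algebraic factors $(2|\sin(\lambda s_i/2)|)^{-2d_i}$. Their only singularities sit at the frequencies $\lambda_{ij}=2\pi j/s_i$, and near such a point $f(\lambda)$ behaves like a constant times $|\lambda-\lambda_{ij}|^{-2d_{ij}}$, where by \eqref{model6} the local pole order equals $2d_i$ at a frequency belonging to a single period and $2(d_1+d_2)$ at a shared frequency (including $0$ and $\pi$). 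Hence $f\in L^1$ once $2d_i<1$ and $2(d_1+d_2)<1$, and $1/f\in L^1$ once $-2d_i<1$ and $-2(d_1+d_2)<1$; together these read $|d_i|<1/2$, $i=1,2$, and $|d_1+d_2|<1/2$, which proves (a).

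For part (c), I would start from $\gamma(h)=\int_{-\pi}^{\pi}f(\lambda)e^{ih\lambda}\,d\lambda$ and extract its rate from the singularities of $f$, in the spirit of the seasonal/ARUMA long-memory results of Giraitis and Leipus (1995), Viano et al. (1995) and Palma (2007, Ch.\ 12). Partition $[-\pi,\pi]$ into small arcs around each pole $\pm\lambda_{ij}$ together with a complementary set on which $f$ is continuously differentiable; on the latter, integration by parts (Riemann--Lebesgue) yields a contribution that is $o(|h|^{2d_{ij}-1})$ for every $(i,j)$, hence negligible. On the arc around an interior pole $\lambda_0=\lambda_{ij}\in(0,\pi)$, write $f(\lambda_0+u)=c_{ij}|u|^{-2d_{ij}}(1+O(|u|))$ as $u\to0$, where $c_{ij}$ is $f_\nu(\lambda_0)$ times the product of all the remaining, non-singular algebraic factors evaluated at $\lambda_0$; substituting, extending the $u$-integral to $\mathds{R}$ at the cost of a lower-order term, and using the classical transform $\int_{\mathds{R}}|u|^{-2d}e^{ihu}\,du=2\,\Gamma(1-2d)\sin(\pi d)\,|h|^{2d-1}$ (valid for $0<d<1/2$) reduces the arc to $c_{ij}\,e^{ih\lambda_0}\cdot 2\Gamma(1-2d_{ij})\sin(\pi d_{ij})\,|h|^{2d_{ij}-1}$; pairing $\lambda_0$ with its mirror $-\lambda_0$ turns $e^{\pm ih\lambda_0}$ into $2\cos(h\lambda_{ij})$, producing the term $a_{ij}|h|^{2d_{ij}-1}\cos(h\lambda_{ij})$ with $a_{ij}=2a'_{ij}$. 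The endpoints $\lambda_0\in\{0,\pi\}$ are handled identically but lack a mirror arc, so $a_{ij}=a'_{ij}$ there and $\cos(h\lambda_{ij})\in\{1,(-1)^h\}$; poles with $d_{ij}\le0$ are integrable and feed only into the $o(1)$ terms, which is precisely why $\max\{d_{ij}\}>0$ is imposed. Summing the arc contributions gives the stated expansion. The one point that remains is to identify the constant, i.e. to check that $c_{ij}$ equals $\tfrac12\,\tfrac{\sigma_\epsilon^2}{\pi}\,|\Theta(e^{-i\lambda_{ij}})/\Phi(e^{-i\lambda_{ij}})|^2\,D_{ij}^2$, which amounts to recognizing $D_{ij}^2$ as the product of the non-singular $|2\sin(\cdot)|^{-2d_{i\ell}}$ factors at $\lambda_{ij}$; this rests on the identity $2\sin\frac{\lambda_{ij}-\lambda_{i\ell}}{2}\sin\frac{\lambda_{ij}+\lambda_{i\ell}}{2}=\cos\lambda_{i\ell}-\cos\lambda_{ij}$ together with $2\sin\frac{\lambda_{ij}+\lambda_{ij}}{2}=2\sin\lambda_{ij}$ for the partner of the singular factor. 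The main obstacle is thus bookkeeping rather than analysis: carrying the $o(1)$ remainders uniformly when several $d_{ij}$ are of comparable magnitude (so that all of them genuinely appear in the sum, each at its own rate $|h|^{2d_{ij}-1}$) and keeping straight the factor-of-two conventions separating interior poles from the endpoints $0,\pi$ and shared from non-shared frequencies as prescribed by \eqref{model6}.
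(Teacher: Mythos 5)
Your proposal is correct in substance, but it takes a genuinely different route from the paper. The paper's proof is essentially by reduction and citation: it rewrites the two-period filter via (\ref{model5})--(\ref{model6}) as a special case of the ARUMA operator (\ref{model2}) and then invokes Theorems 1 and 2 of Giraitis and Leipus (1995) -- their stationarity/invertibility condition ($|d_{\ell}|<1/2$ at interior frequencies, $|d_{\ell}|<1/4$ at $0,\pi$, which through the bookkeeping in (\ref{model6}) becomes exactly $|d_i|<1/2$ and $|d_1+d_2|<1/2$) for part (a), and their spectral-density and autocovariance expansions for parts (b) and (c). You instead re-derive this content directly: (b) by the elementary transfer-function identity $f=|A(e^{-i\lambda})|^2 f_\nu$, (a) by reading stationarity and invertibility off the $L^1$-integrability of $f$ and $1/f$ through the local pole orders $2d_i$ and $2(d_1+d_2)$, and (c) by an Abelian argument pairing mirror poles and using $\int_{\mathds{R}}|u|^{-2d}e^{ihu}du=2\Gamma(1-2d)\sin(\pi d)|h|^{2d-1}$, with the trigonometric identity identifying the constant $D_{ij}^2$. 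What the paper's route buys is brevity and delegation of the delicate remainder analysis to Giraitis--Leipus; what yours buys is a self-contained account in which the origin of the conditions $|d_i|<1/2$, $|d_1+d_2|<1/2$ and of the constants $a'_{ij}$ is transparent. Two small points of care in your version: the claim that the smooth complement contributes $o(|h|^{2d_{ij}-1})$ ``for every $(i,j)$'' only makes sense for the poles with $d_{ij}>0$ (the $O(1/|h|)$ from integration by parts is negligible precisely because $\max\{d_{ij}\}>0$, as you note later), and the ``exactly when'' equivalences in (a) are stated more strongly than needed -- the proposition only asserts sufficiency, which is all your integrability check delivers without a further argument that square-summability of the formal expansion coefficients matches the $L^2$ boundary function. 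Neither affects the validity of the result as stated.
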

\begin{proof}  (a)  As previously noted,   filter \eqref{model5} is a particular
case of \eqref{model2} which is the operator of the ARUMA$(p,d_1,...,d_L,q)$ model
where $p$ and $q$ are the polynomial orders of a stationary and invertible
ARMA$(p,q)$ process. From Theorem 1 of Giraitis and Leipus (1995), the ARUMA
$(0,d_1,...,d_L,0)$ process is stationary and invertible if the fractional
parameters $d_{\ell}$, $\ell=1,...,L$, in \eqref{model2} satisfy $|d_{\ell}| < 1/2$ when
$\lambda_{\ell} \neq 0,\pi$ and $|d_{\ell}| < 1/4$ otherwise. From this fact and by means
of equations \eqref{model5} and \eqref{model6}, it is straightforward to establish
the stationary and invertibility properties. The proof of (b) is immediately
obtained from ($\ref{model3}$) and Theorem 2 in Giraitis and Leipus (1995). This
theorem is also used to prove the asymptotic covariance given in ($c$) where
$d_{ij}$  is defined by (\ref{model6}).

\end{proof}

%

\section{ Seasonal fractional parameter estimators}
This section deals with the  estimation method based on the
regression equation of  $ \log f(\lambda)$  to obtain the
estimates of  model (\ref{model7}). Since the procedure  proposed
here provides simultaneous estimates for  multiple seasonal memory
parameters, the method is a more general approach  than those
discussed in Reisen et al. (2006a,b) and related references.  Let $n$ be the sample
size and let $X_1,\dots,X_n$ be a realization of the process defined
by \eqref{model7}, where $\nu_t $ is a Gaussian ARMA process. The well-known
periodogram function   $I(\lambda) = {(2 \pi n)}^{-1} |\sum_{t=1}^{n}
X_{t}e^{i\lambda t}|^{2}$ is an asymptotic unbiased and inconsistent estimator the spectral density and it is the standard estimator used in  time series modeling.

%
%

\subsection{The OLS regression estimators}


\vspace{0.3cm} The fractional memory {\bf OLS} estimators are the slope estimators of the multiple regression equation

\begin{equation}
\log I_{k,j}=a_k-2d_1\log X_{1,k,j}-2d_2\log X_{2,k,j}+V_{k,j}\;\;\;,k=0,1,...,[s'/2], \label{model8}
\end{equation}

\noindent where $s'=\max (s_1,s_2)$, $j=1,...,m$  ( $m \in \mathds{N^*}$) if $k=0$, $j=-1,...,-m$ if $k=s'/2$ ($s$ even) and $j=\pm 1,...,\pm m$  otherwise,
$I_{k,j}=I(2\pi k/s'+\lambda_j)$, $\lambda_j=2\pi j/n$ is the Fourier frequency, $[\cdot]$ means the integer part and
\begin{eqnarray*}
a_k&=&\log f_{\nu,k}+E\left(\log\frac{ I_{k,j}}{f_{k,j}}\right)\\
V_{k,j}&=& U_{k,j}+ \varepsilon_{k,j}\\
U_{k,j}&=&\log\frac{ I_{k,j}}{f_{k,j}}-E\left(\log\frac{ I_{k,j}}{f_{k,j}}\right)\\
\varepsilon_{k,j}&=& \log \frac{ f_{\nu}\left(\frac{2\pi k}{s'}+\lambda_j\right)}{ f_{\nu,k}}
= b_k\lambda_j^{\alpha}+O(\lambda_j^{\alpha+\iota})\\
X_{1,k,j}&=&2\sin\left(\frac{s_1}{2}\left[\frac{2\pi k}{s'}+\lambda_j\right]\right)\\
X_{2,k,j}&=&2\sin\left(\frac{s_2}{2}\left[\frac{2\pi k}{s'}+\lambda_j\right]\right)
\end{eqnarray*}
for $f_{k,j}=f\left(\frac{2\pi k}{s'}+\lambda_j\right)$ and $b_k=c_k/f_{v,k}$.  The regression equation (\ref{model8}) is easily  derived from the expression of the  $\log
f(\lambda)$ where $f(\lambda)$ is the spectral density given in Proposition 1.
To avoid the estimation of the constants $a_k$, the variables are locally centered such that the estimates  are obtained by least squares in the regression model
\begin{equation}
Y_{k,j}=d_1Z_{1,k,j}+d_2Z_{2,k,j}+V^*_{k,j},\;\;\,
\label{model9}
\end{equation}
where $V^*_{kj}=V_{k,j}-\frac{1}{m_k}\sum_j^*V_{k,j}$ for $m_k=\delta_k m$ with
$\delta_k=1$ for $k=0,s'/2$ and $\delta_k=2$ otherwise and the sum $\sum^*$ runs for
$j=1,...,m$ if $k=0$, $j=-1,...,-m$ if $k=s'/2$ and $j=\pm 1,...,\pm m$ otherwise.
$Y_{k,j}$, $Z_{1,k,j}$ and $Z_{2,k,j}$ are the locally centered dependent variable
and regressors in (\ref{model8}) similarly defined. The local centering is needed
here because the regression model in (\ref{model8}) has different constants
depending on the  frequency bandwidth. A global centering can be used only if
$a_1=...=a_{[s'/2]}$ which holds for example if $\nu_t$ is  a white noise process
with a constant spectral density function.

The  estimation procedure based on the above regression equation
is  motivated by the pioneer regression estimator proposed by
Geweke and Porter-Hudak (1983) for the  ARFIMA model.
Since the introduction of the method, it has became one of the most
popular estimation procedures and its empirical and asymptotic
properties have been well established. Robinson (1995) and  Hurvich,
Deo and Brodsky (1998)  proved that  the GPH-estimator is
consistent and asymptotically normal for Gaussian time series
processes.  Hurvich et al. (1998) also established that the optimal bandwidth  is of order $O(n^{4/5})$.


When the model is a SARFIMA$(0,d,0)\times(0,d_s,0)_s$ process, Reisen et al. (2006a,b)
proposed  different estimation methods for $d_s$ and $d$. Basically, the regression
estimators considered in their study are distinguished by the choice of the
bandwidth  when regressing $\log[I(\lambda)]$ on $\log[{2 \sin(\lambda s/2)}]$ and
$\log[{2\sin(\lambda /2)}]$.
 Following the same direction,  their study is  generalized here  in
the case where the model has two seasonal fractional parameters
$d_1 $ and $d_2$ for the seasonal periods $s_1$ and $s_2$,
respectively.

{\bf Assumption 2:} $s_1$ is a multiple of  $s_2$.

{\bf Assumption 3:} Let $m=m(n)$ is a sequence satisfying
\[\left(\frac{m}{n}\right)^{\iota}\log m +\frac{1}{m}\rightarrow
0\;\;\;\mbox{ as }\; n\rightarrow \infty\] for some $\iota>0$.

{\bf Assumption 4:} \[\frac{m^{\alpha^* +0.5}}{n^{\alpha^*}}\rightarrow K \;\;\;\mbox{ as }\; n\rightarrow \infty\]
where $\alpha^*=\min (\alpha_1,\alpha_2)$.

\begin{theorem} Under assumptions 1,2 and 3, as $n\rightarrow\infty$,
\begin{description}
\item \[E(\hat{d})-d= Q^{-1}b_n(1+o(1))\]
\item \[Var(\hat{d})=m^{-1}\frac{\pi^2}{6}Q^{-1}(1+o(1)),\]
\end{description}
where
\[b_n=-2\left(\begin{array}{c}
\sum_{k=0}^{[s'/2]}b_k\delta_k (2\pi)^{\al_k}\frac{\al_k}{(\al_k+1)^2}\left(\frac{m}{n}\right)^{\alpha_k}\\\sum_{k\in I_k}b_k\delta_k (2\pi)^{\al_k}\frac{\al_k}{(\al_k+1)^2}\left(\frac{m}{n}\right)^{\alpha_k}\end{array} \right),\]
\[Q=4\left(\begin{array}{cc}
\sum_{k=0}^{[s'/2]}\delta_k & \sum_{k\in I_k}\delta_k \\
\sum_{k\in I_k}\delta_k & \sum_{k\in I_k}\delta_k
\end{array} \right),
\]
where $I_k=\{0,k\mbox{ such that $ks_2$ is a multiple of $s'$ }\}$, $\delta_k=1$ for $k=0,s'/2$ and $\delta_k=2$ otherwise. In consequence, $\hat{d}$ is consistent.
\end{theorem}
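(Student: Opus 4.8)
The plan is to treat the OLS estimator $\hat d=(\hat d_1,\hat d_2)'$ in the centered regression (\ref{model9}) as a linear estimator of the form $\hat d - d = (\mathbf Z'\mathbf Z)^{-1}\mathbf Z' V^*$, where $\mathbf Z$ stacks the centered regressors $Z_{1,k,j},Z_{2,k,j}$ and $V^*$ stacks the centered errors $V^*_{k,j}=U^*_{k,j}+\varepsilon^*_{k,j}$. The first step is a deterministic one: to show that, after normalizing by $m$, the design matrix $m^{-1}\mathbf Z'\mathbf Z$ converges to $\tfrac14 Q$. Here one uses that $Z_{i,k,j}$ is the local centering of $-2\log X_{i,k,j}=-2\log|2\sin(\tfrac{s_i}{2}[\tfrac{2\pi k}{s'}+\lambda_j])|$; near a Fourier frequency $2\pi k/s'$ one has $\tfrac{s_i}{2}\lambda_j\approx 0$ whenever $ks_i/s'\in\mathbb Z$ (using Assumption 2, $s_1$ a multiple of $s_2$, and Assumption 4 implicitly through the set $I_k$), so that $\log X_{i,k,j}\approx \log|s_i\lambda_j/?|$ behaves like $\log j+$const and the centered regressor behaves like $\log j-\tfrac1m\sum\log j$; otherwise $X_{i,k,j}$ is bounded away from $0$ and its centered version is $o(1)$. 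A standard Riemann-sum / Euler–Maclaurin computation of $\tfrac1m\sum_{j=1}^m(\log j-\overline{\log})^2\to1$ and of the cross term then yields the block structure of $Q$, with the multiplicities $\delta_k$ arising from the two-sided ($\pm j$) sums at interior frequencies and the membership conditions in $I_k$ selecting exactly the frequencies where the $s_2$-filter contributes a pole.

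The second step is the bias term. Writing $\hat d - d=(\mathbf Z'\mathbf Z)^{-1}\mathbf Z'(U^*+\varepsilon^*)$ and taking expectations, the $U^*$ part has mean zero by construction (the term $E[\log(I_{k,j}/f_{k,j})]$ has been absorbed into $a_k$), so $E(\hat d)-d=(\mathbf Z'\mathbf Z)^{-1}\mathbf Z'\varepsilon^*$ exactly. Here I would substitute the expansion $\varepsilon_{k,j}=b_k\lambda_j^{\alpha_k}+O(\lambda_j^{\alpha_k+\iota})$ from Assumption 1, center it, and evaluate $m^{-1}\mathbf Z'\varepsilon^*$ as a Riemann sum: $\tfrac1m\sum_{j=1}^m(\log j-\overline{\log})(j/n)^{\alpha_k}\cdot b_k(2\pi)^{\alpha_k}$. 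The integral $\int_0^1 x^{\alpha}(\log x - \int_0^1\log)\,dx$ evaluates to $-\alpha/(\alpha+1)^2$, which, combined with the normalization of the design matrix, produces exactly the stated $b_n$ and the formula $E(\hat d)-d=Q^{-1}b_n(1+o(1))$; the remainder is controlled to be $o((m/n)^{\alpha_k})$ using Assumption 3, $(m/n)^\iota\log m\to0$.

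The third step is the variance. Since $\varepsilon^*$ is deterministic, $\var(\hat d)=(\mathbf Z'\mathbf Z)^{-1}\mathbf Z'\,\cov(U^*)\,\mathbf Z(\mathbf Z'\mathbf Z)^{-1}$, so everything reduces to showing $\cov(U_{k,j},U_{k',j'})\to \tfrac{\pi^2}{6}$ on the diagonal and $\to0$ off-diagonal, after which $m^{-1}\mathbf Z'\cov(U^*)\mathbf Z\to \tfrac{\pi^2}{6}\cdot\tfrac14 Q$ and the sandwich collapses to $m^{-1}\tfrac{\pi^2}{6}Q^{-1}(1+o(1))$. This is where I expect the main obstacle to lie: one needs that the normalized periodogram ordinates $I_{k,j}/f_{k,j}$ at the shifted Fourier frequencies $\tfrac{2\pi k}{s'}+\lambda_j$ are asymptotically i.i.d.\ standard exponential, so that $U_{k,j}=\log(I_{k,j}/f_{k,j})-E[\cdot]$ have asymptotic variance $\var(\log\mathcal E)=\pi^2/6$ and vanishing cross-covariances. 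For Gaussian $X_t$ with a spectral density having poles of the type in Proposition 1(b), this follows from the now-classical arguments of Robinson (1995) and Hurvich, Deo and Brodsky (1998) on the behaviour of the discrete Fourier transform near spectral singularities, applied here at each block of frequencies around $2\pi k/s'$ (the shifts by $2\pi k/s'$ make the relevant singularity look locally like the origin, with exponent $2(d_{i j}+\cdots)$ bounded by the stationarity constraint from Proposition 1(a)); one must check that distinct blocks $k\neq k'$ decouple, which again uses Assumption 2 so that the frequencies $2\pi k/s'$ are commensurate and the interference terms are negligible. Finally, consistency is immediate: $E(\hat d)-d=O((m/n)^{\alpha^*})\to0$ by Assumption 3 and $\var(\hat d)=O(1/m)\to0$, so $\hat d\to d$ in probability.
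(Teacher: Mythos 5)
Your proposal follows essentially the same route as the paper's proof: the same decomposition $\hat d-d=(Z'Z)^{-1}Z'V$, the same Taylor expansions of the log-sine regressors distinguishing $k\in I_k$ from $k\notin I_k$ to get $Z'Z=mQ(1+o(1))$, the same Riemann-sum evaluation (via $\int_0^1 x^{\alpha}(\log x+1)\,dx$) giving $Z'\varepsilon=mb_n(1+o(1))$, and the same appeal to Hurvich, Deo and Brodsky (1998) for the log-periodogram term $Z'U$ and the $\pi^2/6$ variance. Only minor quibbles: Assumption 4 plays no role in Theorem 1 (it is needed for Theorem 2, and $I_k$ comes from Assumption 2), and the bias identity is not ``exact'' since $E\log(I_{k,j}/f_{k,j})$ varies with $j$ and its locally centered remainder must be controlled as in Hurvich et al.\ --- which is precisely the step the paper also delegates to that reference.
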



\begin{theorem} Under assumptions 1, 2, 3 and 4, as $n\ri\infty$,
\[\sqrt{m}(\hat{d}-d)\stackrel{d}{\ri} N\left(Q^{-1}b,\frac{\pi^2}{6}Q^{-1}\right)\]
for
\[b=-2(2\pi)^{\al^*}\frac{\al^*}{(\al^*+1)^2}K\left(\begin{array}{c}
\sum_{k\in J_k}b_k\delta_k \\\sum_{k\in I_k\bigcap J_k}b_k\delta_k \end{array} \right),\]
where $J_k=\{k \mbox{ such that } \al_k=\al^*\}$.
\end{theorem}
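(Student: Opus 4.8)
The plan is to obtain the asymptotic normality from the finite-sample bias and variance established in Theorem 1, together with a central limit theorem for the linearized regression estimator. First I would write the OLS estimator from the centered regression \eqref{model9} explicitly as $\hat d - d = (Z'Z)^{-1}Z'V^*$, where $Z$ stacks the locally centered regressors $Z_{1,k,j},Z_{2,k,j}$ and $V^*$ stacks $V^*_{k,j}=U^*_{k,j}+\varepsilon^*_{k,j}$. The deterministic part $Z'\varepsilon^*$ produces the bias term, while the stochastic part $Z'U^*$ drives the fluctuations. The key algebraic step is to show that $m^{-1}Z'Z\to \tfrac14 Q$ (this is essentially the computation already used in Theorem 1, replacing $\log 2\sin(\cdot)$ asymptotics near the relevant poles by their leading behavior and using Assumption 2 to identify which frequencies $2\pi k/s'$ are genuine poles for the second regressor, i.e. the set $I_k$), so that $Q$ is invertible and the normalization $\sqrt m$ is the correct one.

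Next I would handle the bias. From Theorem 1 we already have $E(\hat d)-d = Q^{-1}b_n(1+o(1))$ with $b_n$ a sum over $k$ of terms of order $(m/n)^{\alpha_k}$. Under Assumption 4, $(m/n)^{\alpha^*}=m^{-1/2}(m^{\alpha^*+1/2}/n^{\alpha^*})\to m^{-1/2}K$, while for any $k$ with $\alpha_k>\alpha^*$ the term $(m/n)^{\alpha_k}=o(m^{-1/2})$. Hence $\sqrt m\,b_n\to b$, where only the frequencies with $\alpha_k=\alpha^*$ survive — that is, the index set $J_k$ in the statement — and the constant $-2(2\pi)^{\alpha^*}\alpha^*/(\alpha^*+1)^2 K$ is exactly the common prefactor of those surviving terms. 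The two components of $b$ differ because the second regressor only ``sees'' the poles in $I_k$, so its contribution is the sum over $I_k\cap J_k$. This gives $\sqrt m\,(E(\hat d)-d)\to Q^{-1}b$.

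For the variance and the Gaussian limit I would show $\sqrt m\,(\hat d - E(\hat d)) = (m^{-1}Z'Z)^{-1} m^{-1/2}Z'U^* + o_p(1) \stackrel{d}{\to} N(0,\tfrac{\pi^2}{6}Q^{-1})$. The vector $m^{-1/2}Z'U^*$ is a weighted sum of the centered log-periodogram residuals $U_{k,j}$; the standard facts (as in Robinson (1995), Hurvich–Deo–Brodsky (1998), invoked in the text) are that, for a Gaussian process with spectral density satisfying Assumption 1, the $U_{k,j}$ are asymptotically independent with mean zero and variance $\pi^2/6$, uniformly over the Fourier frequencies in each band $2\pi k/s'+\lambda_j$, and that replacing $U^*$ by $U$ changes the weighted sum only by $o_p(1)$ thanks to the centering of the $Z$'s. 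A Lyapunov/Lindeberg CLT for the triangular array then yields asymptotic normality with covariance $\lim m^{-1}(Z'Z)^{-1}Z'\,\mathrm{Var}(U)\,Z(Z'Z)^{-1} = (\tfrac14 Q)^{-1}\tfrac{\pi^2}{6}\tfrac14 Q(\tfrac14 Q)^{-1}=\tfrac{\pi^2}{6}Q^{-1}$. Combining the bias and the centered-CLT via Slutsky gives the claimed $\sqrt m(\hat d-d)\stackrel{d}{\to}N(Q^{-1}b,\tfrac{\pi^2}{6}Q^{-1})$.

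The main obstacle I anticipate is controlling the $U_{k,j}$ near the seasonal poles: the periodogram is not a consistent spectral estimator and its distributional approximation degrades close to frequencies where $f$ blows up, so I would need a uniform bound (in the spirit of the trimming-free arguments of Hurvich–Deo–Brodsky) showing the contribution of the $O(1)$-many lowest Fourier frequencies in each band is asymptotically negligible, and that the joint weak convergence across the different bands $k=0,1,\dots,[s'/2]$ holds simultaneously rather than just marginally — this is where Assumptions 2 and 3 (the rate condition $(m/n)^\iota\log m+1/m\to0$) do the real work.
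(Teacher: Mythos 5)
Your overall architecture matches the paper's: the same decomposition $\sqrt m(\hat d-d)=(m^{-1}Z'Z)^{-1}m^{-1/2}Z'\varepsilon+(m^{-1}Z'Z)^{-1}m^{-1/2}Z'U$, the bias limit obtained from Theorem 1 plus Assumption 4 (only frequencies with $\alpha_k=\alpha^*$ survive, which is exactly why $J_k$ and $I_k\cap J_k$ appear in $b$), and Slutsky at the end. One minor slip first: the paper's $Q$ already absorbs the factor 4 coming from the regressors being $2\log|\cdot|$, so $m^{-1}Z'Z\to Q$, not $\tfrac14 Q$; moreover your sandwich identity $(\tfrac14 Q)^{-1}\tfrac{\pi^2}{6}\tfrac14 Q(\tfrac14 Q)^{-1}=\tfrac{\pi^2}{6}Q^{-1}$ is false as written (the left side equals $\tfrac{2\pi^2}{3}Q^{-1}$); with the paper's normalization the computation is consistent.

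The substantive gap is the CLT step. You assert that the $U_{k,j}$ are asymptotically independent with variance $\pi^2/6$, uniformly over the band, and then invoke a Lyapunov/Lindeberg CLT for a triangular array. That is precisely the step that cannot be taken for granted in log-periodogram regression: the $U_{k,j}$ are nonlinear functions of DFT ordinates that are correlated across $j$, and the approximation degrades near the poles, so ``asymptotic independence'' is a heuristic, not a licence for an independent-array CLT. The paper instead (i) reduces to the scalar weighted sum $m^{-1/2}v'Z'U$ with weights $g_{k,j}=v_1Z_{1,k,j}+v_2Z_{2,k,j}$ by Cram\'er--Wold, which also settles the joint convergence across the bands $k=0,\dots,[s'/2]$ that you flag as a worry; (ii) discards all frequencies with $|j|\le m^{0.5+\delta}$ in each band --- a growing number of terms, not the $O(1)$ lowest frequencies you propose to neglect --- showing their contribution is $o_p(1)$ as in Hurvich, Deo and Brodsky (1998); (iii) verifies the weight conditions $\max_{j,k}|g_{k,j}|=O(\log m)$, $\sum^*|g_{k,j}|^p=O(m)$ and $\sum^*g_{k,j}^2=mv'Qv(1+o(1))$; and (iv) applies Lemma 4 of Sun and Phillips (2003), a CLT built precisely for such weighted sums of dependent log-periodogram errors. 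Without that lemma (or redoing the Robinson/Hurvich--Deo--Brodsky moment machinery for the dependent $U_{k,j}$), your argument does not establish the Gaussian limit, even though the limiting mean $Q^{-1}b$ and variance $\tfrac{\pi^2}{6}Q^{-1}$ you target are the correct ones.
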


Proofs of the above theorems are in Appendix A.

As a particular case of   Model \ref{model7},  the  statistical
properties of the SARFIMA$(0,d,0)_s$ model, with  $\nu_t \equiv
\epsilon_t$,  are now discussed.


The OLS estimator of $d$  is given by
\begin{equation}
\label{model10}
     \widehat{d} = (-0.5)\frac{\sum_{k=0}^{[\frac{ s}{2}]}\sum_{j=1}^{m}(X_{1,k,j} -
     \bar{X_1})\log I_{k,j} }{\sum_{k=0}^{[\frac{s}{2}]}\sum_{j=1}^{m}(X_{1,k,j}-\overline{X_1})^2},
\end{equation}
where $X_{1,k,j}= \log \left\{2\sin((s\lambda_{k,j}/2)\right\}$. By simple algebra, the following expression is reached.
 \begin{equation}
    \widehat{d} - d \approx - \frac{1}{2S_{X_1X_1}} \sum_{k=0}^{[\frac{ s}{2}]} \sum^{m}_{j=1}
    (X_{1,k,j}-\overline{X_1})U_{k,j},
\end{equation}
where  $S_{X_1X_1} ={\sum_{k=0}^{[\frac{ s}{2}]}\sum_{j}^{m}(X_{1,k,j}-\overline{X_1})^2}$ and $j$ is defined as in (\ref{model8}).

\begin{prop} Let  $X_t$   be a   SARFIMA$(0,d,0)_s$ model  and $\widehat{d}$
is the OLS estimator of $d$ provided by (\ref{model10}). Under assumptions 1 to 4, as $n \to \infty$,

\begin{itemize}
\item[(a)]$$ E(\widehat{d})  \approx d.$$
\item[(b)] The variance of the estimator is given by
\begin{equation}
Var(\widehat{d})\approx  \frac{\pi^2}{24sm}.
  \end{equation}
  \item[(c)] The estimate $\widehat{d}$ satisfies
  $$ \sqrt{m}\,(\widehat{d}-d) \stackrel{d}{\ri} N\left(0,\frac{\pi^2}{24s}\right).$$

\end{itemize}
\end{prop}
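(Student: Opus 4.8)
The plan is to obtain Proposition 2 as a direct specialization of Theorems 1 and 2 to a single seasonal period $s$ with white-noise innovations $\nu_t\equiv\epsilon_t$. First I would note that when $\nu_t$ is white noise its spectral density $f_\nu$ is constant, so in Assumption 1 one has $c_k=0$, hence $b_k=c_k/f_{\nu,k}=0$ for every $k$; consequently the bias vectors $b_n$ of Theorem 1 and $b$ of Theorem 2 are identically zero. This immediately gives part (a), since $E(\hat d)-d=Q^{-1}b_n(1+o(1))=o(1)$, and it reduces the limit law in (c) to a centered Gaussian.

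The only genuine computation is evaluating the matrix $Q$ of Theorem 1 in the one-parameter case. With a single regressor $X_{1,k,j}$ the relevant object is the scalar $Q=4\sum_{k=0}^{[s/2]}\delta_k$, where $\delta_0=\delta_{s/2}=1$ and $\delta_k=2$ for the remaining $k=1,\dots,s/2-1$; summing gives $\sum_{k=0}^{[s/2]}\delta_k=1+1+2\big(\tfrac s2-1\big)=s$, so $Q=4s$ and $Q^{-1}=1/(4s)$. Substituting into the variance formula of Theorem 1 yields $Var(\hat d)=m^{-1}\tfrac{\pi^2}{6}\cdot\tfrac1{4s}(1+o(1))=\tfrac{\pi^2}{24sm}(1+o(1))$, which is part (b); substituting into Theorem 2 (with $b=0$) yields $\sqrt m(\hat d-d)\stackrel{d}{\ri}N\big(0,\tfrac{\pi^2}{6}\cdot\tfrac1{4s}\big)=N\big(0,\tfrac{\pi^2}{24s}\big)$, which is part (c).

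As an independent check one can verify the result directly from the displayed approximation $\hat d-d\approx-\tfrac1{2S_{X_1X_1}}\sum_{k}\sum_{j}(X_{1,k,j}-\overline{X_1})U_{k,j}$. Using $\lambda_{k,j}=2\pi k/s+2\pi j/n$ one has $2\sin(s\lambda_{k,j}/2)=2\sin(\pi k+\pi s j/n)=\pm2\sin(\pi s j/n)$, so $X_{1,k,j}=\log|2\sin(\pi s j/n)|=\log(2\pi s/n)+\log|j|+o(1)$, independent of $k$ to leading order; hence after centering $X_{1,k,j}-\overline{X_1}\approx\log|j|-\tfrac1m\sum_{\ell=1}^m\log\ell$, and since each value $\log\ell$ occurs $s$ times in the double sum, $S_{X_1X_1}\approx s\sum_{\ell=1}^m(\log\ell-\overline{\log\ell})^2\sim sm$ by the standard estimate $\sum_{\ell=1}^m(\log\ell-\overline{\log\ell})^2=m(1+o(1))$. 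Combining this with $E(U_{k,j})=0$, $Var(U_{k,j})\to\pi^2/6$ and the asymptotic independence of the log-periodogram ordinates across Fourier frequencies (at and among the seasonal harmonics) gives (a) and (b) directly, and a Lindeberg central limit theorem for the weighted sum $\sum(X_{1,k,j}-\overline{X_1})U_{k,j}$ — whose maximal normalized weight is $O(\log^2m/m)\to0$ — gives (c).

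The main obstacle is, in fact, minor. Along the corollary route it is merely the bookkeeping $\sum_{k=0}^{[s/2]}\delta_k=s$ together with the observation that white-noise innovations annihilate the bias. Along the direct route the delicate ingredients — the $\pi^2/6$ limiting variance and asymptotic independence of $\log I_{k,j}$ at and near the seasonal frequencies, and the negligibility of replacing $\log I_{k,j}$ by its linearization — are precisely what Appendix A establishes for Theorems 1 and 2, so they may be quoted rather than reproved; the only new arithmetic is $S_{X_1X_1}\sim sm$.
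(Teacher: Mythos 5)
Your proposal is correct and follows essentially the same route as the paper, whose proof simply states that the result is a particular case of Theorems 1 and 2 (coinciding with Hurvich, Deo and Brodsky, 1998); you merely make explicit the bookkeeping the paper leaves implicit, namely that white-noise innovations give $b_k=0$ (hence zero asymptotic bias) and that $\sum_{k=0}^{[s/2]}\delta_k=s$ so $Q=4s$, yielding the variance $\pi^2/(24sm)$. Your additional direct verification via $S_{X_1X_1}\sim sm$ is a sound cross-check but not a different method.
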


\begin{proof}
The above results are  particular cases of Theorem 1 and coincide with  Theorems 1 and 2 in Hurvich, Deo and Brodsky (1998). Note that the variance of the estimator suggested in Porter-Hudak (1990) and Ray (1993) is approximately $4s Var(\widehat{d})$.
\end{proof}


\vspace{0.3cm}
\section{ Finite sample investigation}
\vspace{0.3cm}

The finite sample performance of the estimator discussed previously is investigated in this section through  Monte Carlo
experiments for different structures of Model \ref{model7} where $\nu_t$ follows a SARMA model. To generate the models, the
procedure used is the one suggested in Hosking (1984) with i.i.d  innovations from a  N(0,1) distribution. The models are:  SARFIMA$(p,d_1,q)_{s_1}(P,d_2,Q)_{s_2}$ with $p=P=0,1$, $s_1$ = 1,4, $s_2$ = 4, 12 and the AR non-seasonal ($\phi_1$) and seasonal ($\phi_s$) parameters with values $\phi_1$=$\phi_s$= 0.0, 0.3 and 0.8. The parameters are also displayed in the tables.
The empirical investigations were based on sample size $n$ = 1080, and the  sample quantities mean, correlation  and mean squared error ($mse$) of the estimators were calculated  over 2,000 replications. The calculations were carried out by means of an Ox program in an AMD Athlon XP 1800 computer.

Since the models also involve  short-range dynamics, the regression estimators were obtained  by using different bandwidths. In
the case where the model has not AR contribution, the bandwidth $m = [\frac{n-1}{\max(s_1, s_2)}]$ was fixed. In this
context, the regression estimator ($GPH_T$) becomes a parametric procedure. For the models with short-memory dynamics, the two bandwidths $m= n^{\alpha_i}$, $\alpha_1=0.5$ and $\alpha_2=0.3$ were used, and the estimators are denoted as $ GPH_{1} $ and $ GPH_{2} $, respectively.  The bandwidth $n^{0.5}$  is here considered because this specification   has been widely used in the case of ARFIMA models with short-memory components, while the choice of   $n^{0.3}$ is based on the empirical investigation discussed below.

For a comparison purpose between semiparametric and parametric approaches, the procedure due to Fox and Taqqu (1986) (FT),  which  possesses good asymptotic
properties, is also adapted here for Gaussian seasonal long-memory processes.
This estimator is obtained by using all harmonic frequencies between the seasonal frequencies. It is calculated   by minimizing the approximate Gaussian log-likelihood
\begin{equation}
\mathcal{L}_{W}(\boldtheta )=\frac{1}{2n}\sum_{j}^{}\left\{ \ln
f(\lambda_{j}
)+\frac{I(\lambda_{j})}{f(\lambda_{j} )}\right\} ,
\end{equation}
where $f ( \lambda)$ is the spectral density, $\boldtheta$
denotes the   vector of unknown parameters and $\sum_{j}$ is the sum over $j = 1,\dots,n-1$, excluding
those values $\lambda_{j}$ coinciding with the seasonal frequencies.  Under some conditions, the FT estimator for non-seasonal ARFIMA models,
is asymptotically normal and consistent, and for Gaussian process, the estimator is also
asymptotically efficient (Giraitis and Surgailis, 1990, Fox and Taqqu, 1986 and others).  By assuming that the poles of the spectral density
$f(\lambda_{j})$ are known, the asymptotic theory for the
FT method can be extended to the SARFIMA model (see the discussion in Arteche
and Robinson(2000), Section 2). It should be noted that the focus of this paper is to estimate the seasonal fractional parameters only even though the parametric FT method also provides estimates for the AR parameters.

Table \ref{tableS4}  summarizes the results for   the SARFIMA model with $d_1=0.3$ ($s_1=4$).  The first part of this table shows the performance of the regression methods when there is no seasonal AR contribution.   For the case of  $\phi_1 =0$, $GPH_T$ has the best performance among the GPH based ones, which is an expected result  since  the method uses all non-seasonal frequencies in the regression equation and in this sense it is parametric and comparable to the FT. The effect of the bandwidth is also a motivation of this study. The reduction of the bandwidth causes an increase in the $mse$, especially when $\phi_1$= 0.   This  is  a not surprising result, since  the AR contribution is mainly concentrated at zero frequency. The absence of short-memory component allows a wider bandwidth because the bias is quite controlled. Thus, a reduction of $m$ implies a larger variance and does not reduce the bias.

In the second part of the table, the estimates were computed when the model has the AR contribution at the seasonal period $s=4$. From this, the GPH estimates are more affected by the AR component than the previous case, the bias is strongly positive and the $mse$ also increases. In this case the AR component has spectral power not only at frequency zero but also at the seasonal ones, affecting to a greater extent the estimation of $d_1$.  The small value of the bandwidth mitigates the effect of the short-memory parameters. This  is  clearer when  $\phi_4$ changes from $0.3$ to $0.8$. Hence, in this context,  the decrease of the bandwidth produces reduction on the size of the bias and the $mse$.

In general, the FT estimates outperform the GPH estimates in terms of the $mse$, which is not surprising considering the parametric nature of the FT method in a correctly specified model. However, the FT  estimates also present a significant increase of the bias when the model has AR seasonal components. The bias of the estimates  also increases substantially when there is model misspecification, as  can be seen from the examples presented in Tables \ref{MisTable1} and \ref{MisTable2}.  The misspecification problem will be discussed in the end of this section.

The following tables present the estimates when  the models  have more than one fractional parameter. Thus,  the sample correlations between the estimates were also calculated.


\begin{table}[!ht]
\footnotesize{
\begin{center}
\caption{Results for the seasonal ARFIMA model with $d_1=0.3$ ($s_1=4$)
and $\phi_s, \ s=1,4$,  $n=1080$.}

\begin{tabular}{c|c|c|c|c|c}  
\hline
$\phi_s$&estimators&\multicolumn{2}{c|}{$\hat{d}_1$}&\multicolumn{2}{c}{$\hat\phi$}\\
                                               \cline{3-6}
                                &                 &     mean&mse          & mean&mse \\
\hline

                                &GPH$_T$          &0.3004 &0.0012&---   &---      \\
$\phi_1=0.0$                    &GPH$_{1}$ &0.2988 &0.0046&---   &---      \\
                                &GPH$_{2}$ &0.2984 &0.0311&---   &---      \\
                                &FT               &0.2885 &0.0009&---   &---      \\ \hline

                                &GPH$_{1}$ &0.3002 &0.0041&---   &---      \\
$\phi_1=0.3$                    &GPH$_{2}$ &0.3074 &0.0255&---   &---      \\
                                &FT               &0.2888 &0.0009&0.2980&0.0009   \\ \hline

                                &GPH$_{1}$ &0.3097 &0.0054&---   &---      \\
$\phi_1=0.8$                    &GPH$_{2}$ &0.3085 &0.0242&---   &---      \\
                                &FT               &0.2828 &0.0011&0.8011&0.0004   \\ \hline
\hline
\hline
                                &GPH$_{1}$ &0.3459 &0.0068&---   &---      \\
$\phi_4=0.3$                    &GPH$_{2}$ &0.3114 &0.0278&---   &---      \\
                                &FT               &0.1792 &0.0463&0.4144&0.0433   \\ \hline

                                &GPH$_{1}$ &0.7281 &0.1877&---   &---      \\
$\phi_4=0.8$                    &GPH$_{2}$ &0.4144 &0.0426&---   &---      \\
                                &FT               &0.2519 &0.0076&0.8141&0.0035   \\ \hline
\end{tabular}
\label{tableS4}
\end{center}}
\end{table}

 Table \ref{tableS1S4}  displays the result when the models are
 SARFIMA $(1,d_1,0)_{s_1}$ $(1,d_2,0)_{s_2}$ with $d_1$ =0.1$(s_1=1)$, $d_2 =0.3$$(s_2=4)$,  $\phi_1$ = 0.0, 0.3, 0.8 and $\phi_4$= 0.3, 0.8 whereas  Table  \ref{tableS4S12} shows  the performance of the estimates when the SARFIMA model has seasonal periods  $s_1=4$ and $s_2=12$. From Table \ref{tableS1S4} it should be noted that the contribution of  the parameter $d_1$ is mainly  at  zero frequency. Hence, in general, the semiprametric  estimators perform similarly  to the previous case that is, the estimate of $\bf{d}$ depends on the the values of the bandwidth and of the AR counterpart. The  memory parameters are estimated  simultaneously, thus there is a balance effect between the  two  estimates $\hat{d_1}$ and $\hat{d_2}$ which justifies the negative correlation values between them. In addition,   the estimates  are balanced to have the value of $\hat{d_1} +\hat{d_2}$ approximately  equal to $d_1+d_2$ which is the total memory at zero frequency. The correlations between the GPH estimates  increases with the bandwidth and  the AR coefficients. As  was expected, the FT method presents superiority performance compared with the semiparmetric approaches.

\begin{table} [!ht]
\footnotesize{
\centering\caption{Results for models $d_1=0.1$ ($s_1=1$), $d_2=0.3$ ($s_2=4$)
and $\phi_s, \ s=1,4$, case $n=1080$.}
\begin{center}
\begin{tabular}{c|c|c|c|c|c|c|c|c}  
\hline
$\phi_s$&estimators &\multicolumn{2}{c|}{$\hat{d}_1$}&corr.&\multicolumn{2}{c|}{$\hat{d}_2$}&\multicolumn{2}{c}{$\hat\phi$}\\
            \cline{3-4}               \cline{6-9}
    &            &     mean&mse          &         &   mean&mse   & mean&mse \\
\hline
                                &GPH$_T$          &0.1043 &0.0018&$-$0.2228&0.3010&0.0013&---   &---      \\
$\phi_1=0.0$                    &GPH$_{1}$ &0.1135 &0.0290&$-$0.5144&0.2995&0.0053&---   &---      \\
                                &GPH$_{2}$ &0.0818 &0.1327&$-$0.4164&0.3121&0.0388&---   &---      \\
                                &FT               &0.1008 &0.0006&$-$0.1188&0.2868&0.0009&---   &---      \\ \hline
                                &GPH$_{1}$ &0.1166 &0.0215&$-$0.3397&0.3098&0.0045&---   &---      \\
$\phi_1=0.3$                    &GPH$_{2}$ &0.1463 &0.1553&$-$0.4927&0.3083&0.0308&---   &---      \\
                                &FT               &0.0983 &0.0050&$-$0.4216&0.2893&0.0009&0.2997&0.0060   \\ \hline
                                &GPH$_{1}$ &0.2208 &0.0388&$-$0.5415&0.3004&0.0062&---   &---      \\
$\phi_1=0.8$                    &GPH$_{2}$ &0.1257 &0.1607&$-$0.5190&0.3148&0.0375&---   &---      \\
                                &FT               &0.1069 &0.0148&$-$0.0868&0.2819&0.0014&0.7857&0.0124   \\ \hline
\hline
\hline
                                &GPH$_{1}$ &0.1074 &0.0249&$-$0.4751&0.3404&0.0083&---   &---      \\
$\phi_4=0.3$                    &GPH$_{2}$ &0.1107 &0.1299&$-$0.4233&0.3037&0.0348&---   &---      \\
                                &FT               &0.1006 &0.0006&$-$0.0780&0.1828&0.0404&0.4135&0.0393   \\ \hline
                                &GPH$_{1}$ &0.1045 &0.0285&$-$0.5011&0.7269&0.1874&---   &---      \\
$\phi_4=0.8$                    &GPH$_{2}$ &0.0445 &0.1591&$-$0.4773&0.4251&0.0549&---   &---      \\
                                &FT               &0.1073 &0.0011&$-$0.2241&0.2422&0.0079&0.8183&0.0035   \\ \hline
\end{tabular}
\label{tableS1S4}
\end{center}}
\end{table}


Although the model in Table \ref{tableS4S12} has  fractional parameters at seasonality  periods 4 and 12,  similar conclusions of the performance of the estimates to the previous cases are observed.


\begin{table} [!ht]
\footnotesize{
\centering\caption{Results for the SARFIMA model with $d_1=0.1$ ($s_1=4$), $d_2=0.3$ ($s_2=12$),  $\phi_s$, $s=1,4,12$ and  $n=1080$.}
\begin{center}
\begin{tabular}{c|c|c|c|c|c|c|c|c}  
\hline
$\phi_s$&estimators&\multicolumn{2}{c|}{$\hat{d}_1$}&Corr.&\multicolumn{2}{c|}{$\hat{d}_2$}&\multicolumn{2}{c}{$\hat\phi$}\\
             \cline{3-4}               \cline{6-9}
    &            &     mean& mse          &         &   mean&mse   & mean&mse \\
\hline
                                &GPH$_T$          &0.1047 &0.0018&$-$0.3194&0.3065&0.0015&---   &---      \\
$\phi_1=0.0$                    &GPH$_{1}$ &0.0994 &0.0052&$-$0.4405&0.3071&0.0021&---   &---      \\
                                &GPH$_{2}$ &0.0723 &0.0442&$-$0.5529&0.3095&0.0113&---   &---      \\
                                &FT               &0.1022 &0.0007&$-$0.2856&0.2637&0.0021&---   &---      \\ \hline
                                &GPH$_{1}$ &0.1063 &0.0061&$-$0.5230&0.3017&0.0022&---   &---      \\
$\phi_1=0.3$                    &GPH$_{2}$ &0.0797 &0.0433&$-$0.5446&0.2954&0.0119&---   &---      \\
                                &FT               &0.1012 &0.0009&$-$0.3759&0.2630&0.0023&0.2992&0.0008   \\ \hline
                                &GPH$_{1}$ &0.1468 &0.0067&$-$0.5212&0.2861&0.0030&---   &---      \\
$\phi_1=0.8$                    &GPH$_{2}$ &0.1177 &0.0374&$-$0.5767&0.2861&0.0136&---   &---      \\
                                &FT               &0.1010 &0.0008&$-$0.1308&0.2619&0.0023&0.7996&0.0004   \\
                                \hline
\hline
\hline
                                &GPH$_{1}$ &0.2043 &0.0160&$-$0.4707&0.2813&0.0020&---   &---      \\
$\phi_4=0.3$                    &GPH$_{2}$ &0.1146 &0.0338&$-$0.4223&0.3019&0.0084&---   &---      \\
                                &FT               &0.0609 &0.0319&$-$0.1445&0.2575&0.0028&0.3430&0.0296   \\ \hline
                                &GPH$_{1}$ &0.5962 &0.2528&$-$0.4792&0.2618&0.0038&---   &---      \\
$\phi_4=0.8$                    &GPH$_{2}$ &0.2634 &0.0625&$-$0.5317&0.2912&0.0130&---   &---      \\
                                &FT               &0.0754 &0.0234&   0.1184&0.2343&0.0054&0.8121&0.0203   \\

\hline
\hline
                                &GPH$_{\alpha_1}$ &0.1055 &0.0062&$-$0.5344&0.5044&0.0439&---   &---      \\
$\phi_{12}=0.3$                 &GPH$_{\alpha_2}$ &0.1255 &0.0481&$-$0.6629&0.3480&0.0161&---   &---      \\
                                &FT               &0.1010 &0.0008&$-$0.0427&0.2881&0.0031&0.3341&0.0311\\ \hline
                                &GPH$_{\alpha_1}$ &0.0863 &0.0063&$-$0.4587&1.0064&0.5010&---   &---      \\
$\phi_{12}=0.8$                 &GPH$_{\alpha_2}$ &0.1089 &0.0354&$-$0.5466&0.7553&0.2187&---   &---      \\
                                &FT               &0.1071 &0.0017&$-$0.2963&0.1859&0.0197&0.8190&0.0042   \\ \hline

                                 \hline
\hline
\end{tabular}
\label{tableS4S12}
\end{center}}
\end{table}

\pagebreak

As an additional illustrative form to observe the method's performance, the box-plots in Figures \ref{figura1}  and \ref{figura2} show  the variation of the estimates for the model in Table \ref{tableS4S12} with   $\phi_1=0.0$ and $\phi_1=0.3$, respectively.

\begin{figure}[!ht]
\begin{tabular}{cc}
\includegraphics[width=6.0cm,height=4.5cm]{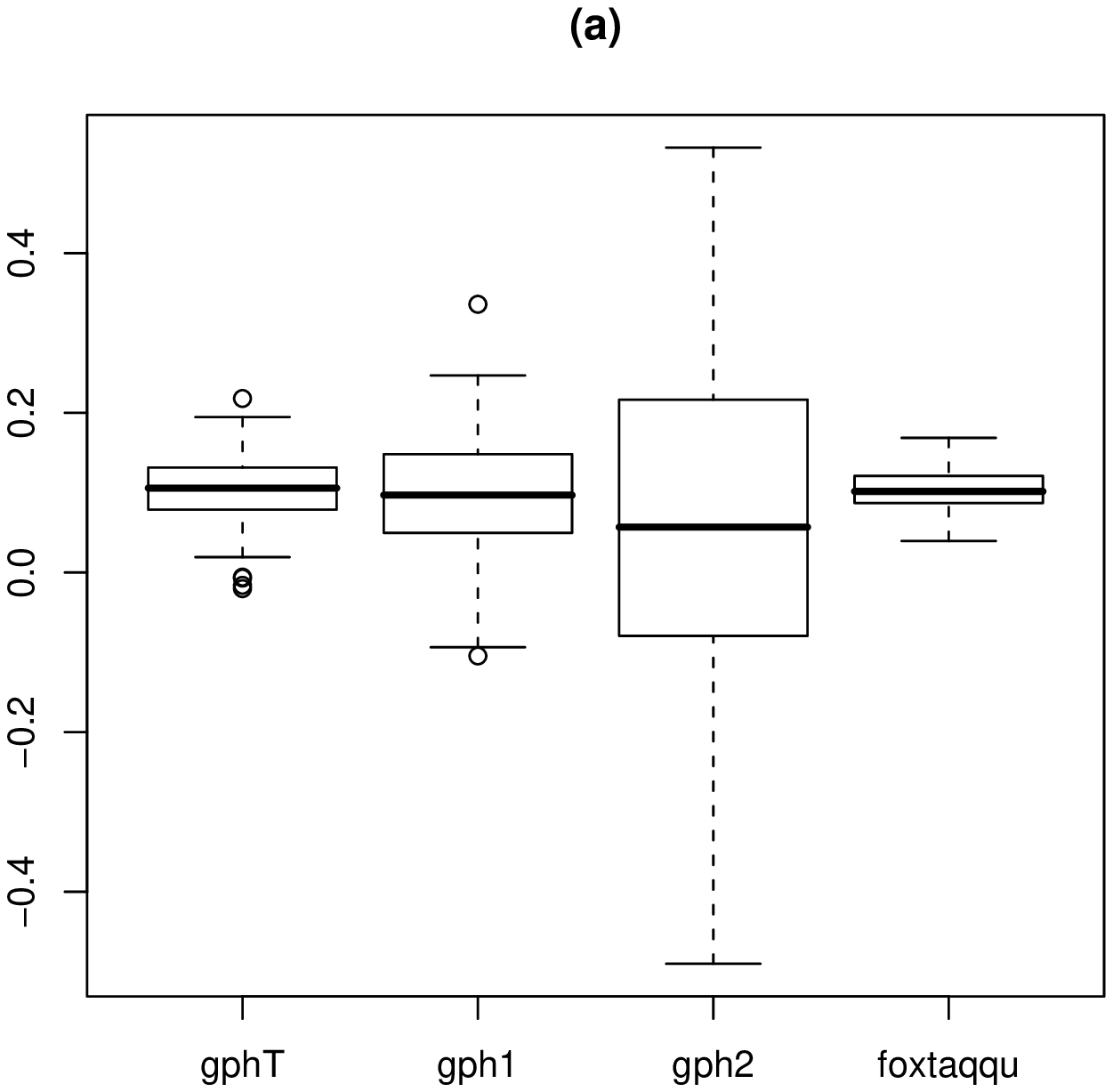}
& \includegraphics[width=6.0cm,height=4.5cm]{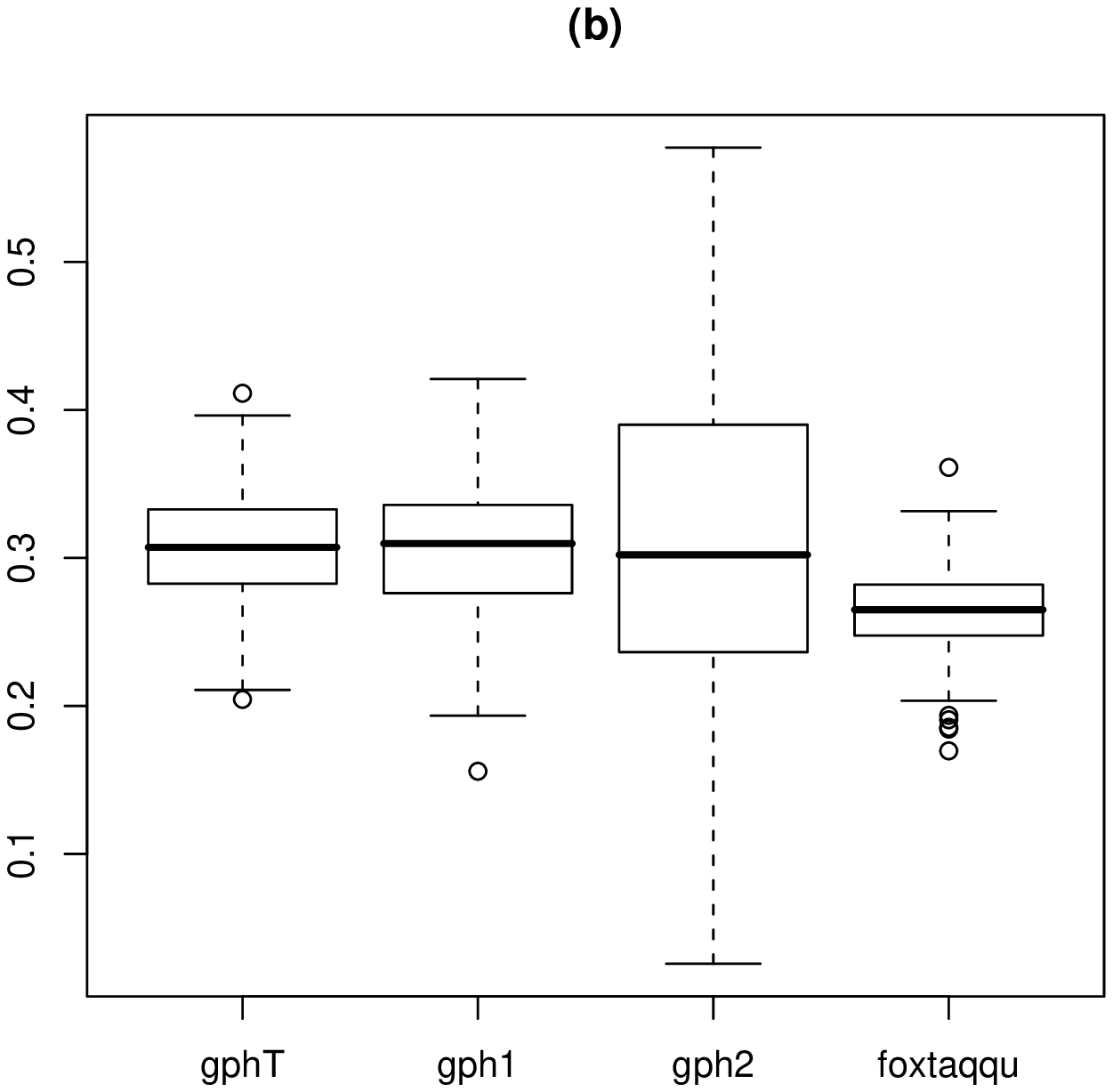}\\
\end{tabular}
\caption{Box-plots of the estimates of $d_1$ (a) and $d_2$ (b) for the SARFIMA model with $d_1=0.1(s_1=4)$, $d_2=0.3(s_2=12)$ and $\phi=0.0$.}
\label{figura1}
\end{figure}

\begin{figure}[!ht]
\begin{tabular}{cc}
\includegraphics[width=6.0cm,height=4.5cm]{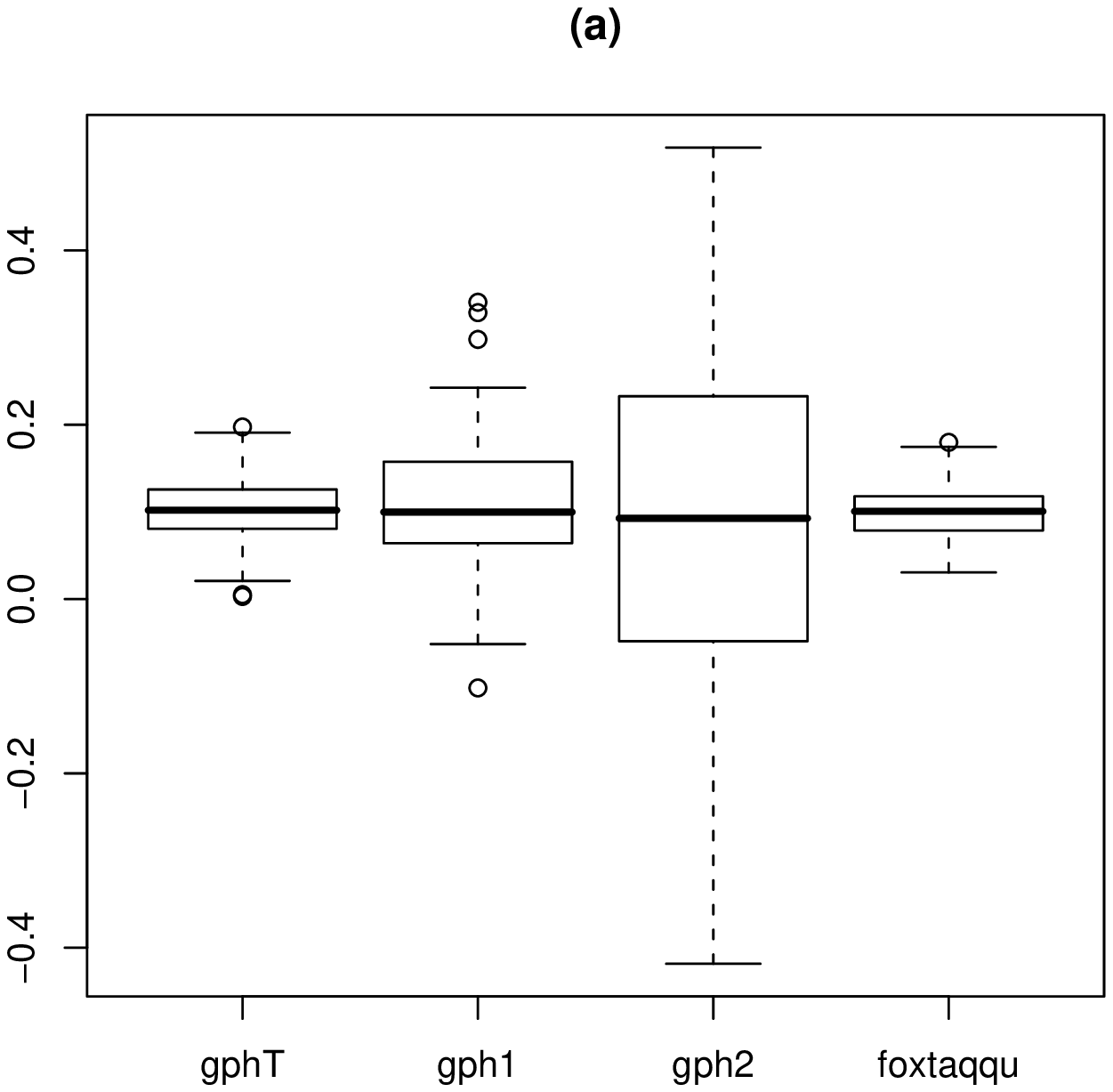}
& \includegraphics[width=6.0cm,height=4.5cm]{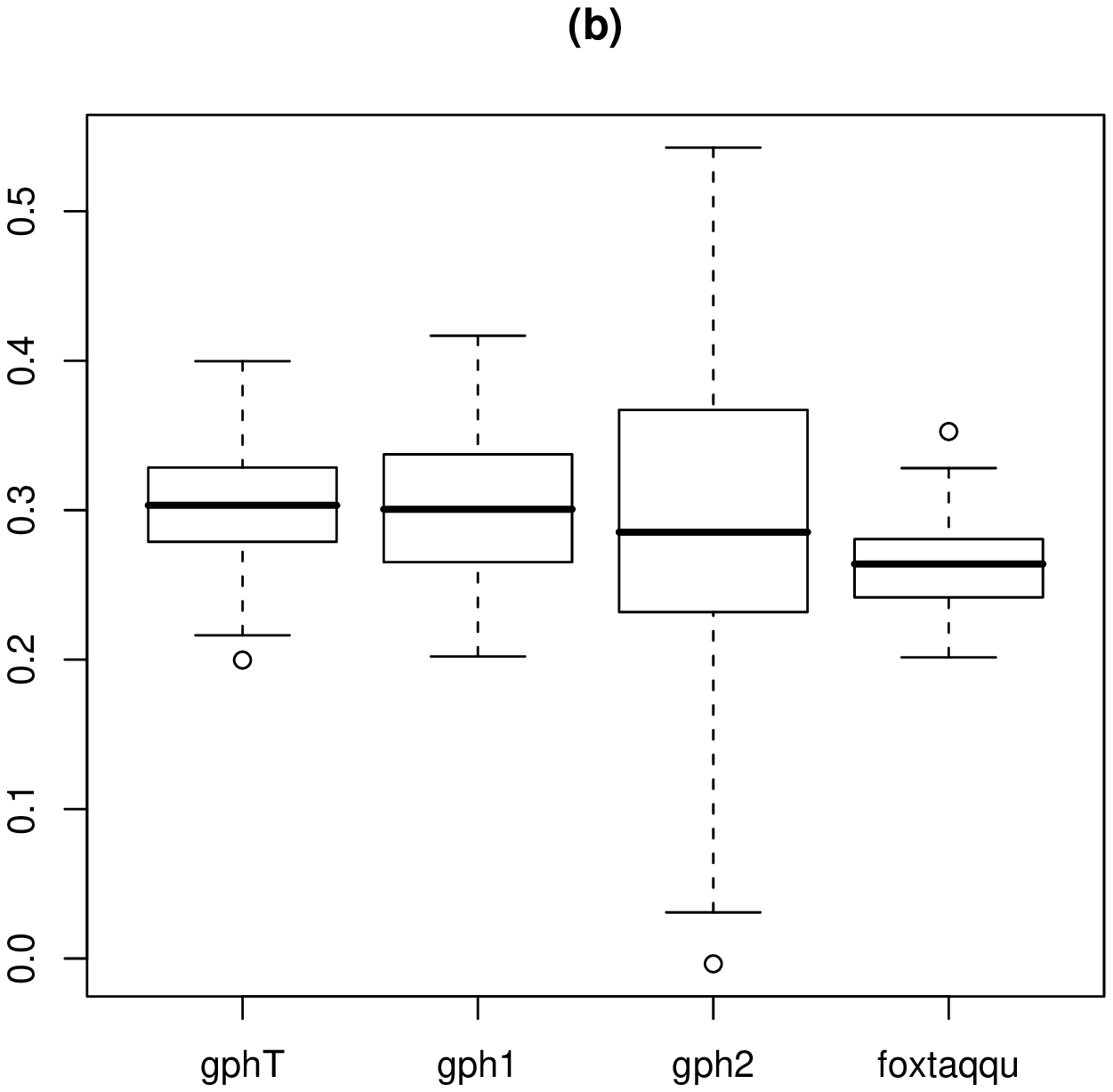}\\
\end{tabular}
\caption{Box-plots of the estimates of  $d_1$ (a) and $d_2$ (b) for the SARFIMA model with $d_1=0.1(s_1=4)$, $d_2=0.3(s_2=12)$ and $\phi_1=0.3$.}
\label{figura2}
\end{figure}

\pagebreak
\pagebreak

 The asymptotic distribution given in Theorem 2 is also empirically investigated for the model in Table \ref{tableS4S12}  with $\phi=0$  , and the results  are depicted  in Figure \ref{figura3} which presents the empirical densities of the standardized GPH estimates of a SARFIMA model.  These figures are examples to support the claim given in Theorem 2.  The empirical densities of the estimates appear to be  fairly  close to the density of  N(0,1) distribution.

\begin{figure}[!ht]
\begin{tabular}{cc}
\includegraphics[width=6.0cm,height=4.5cm]{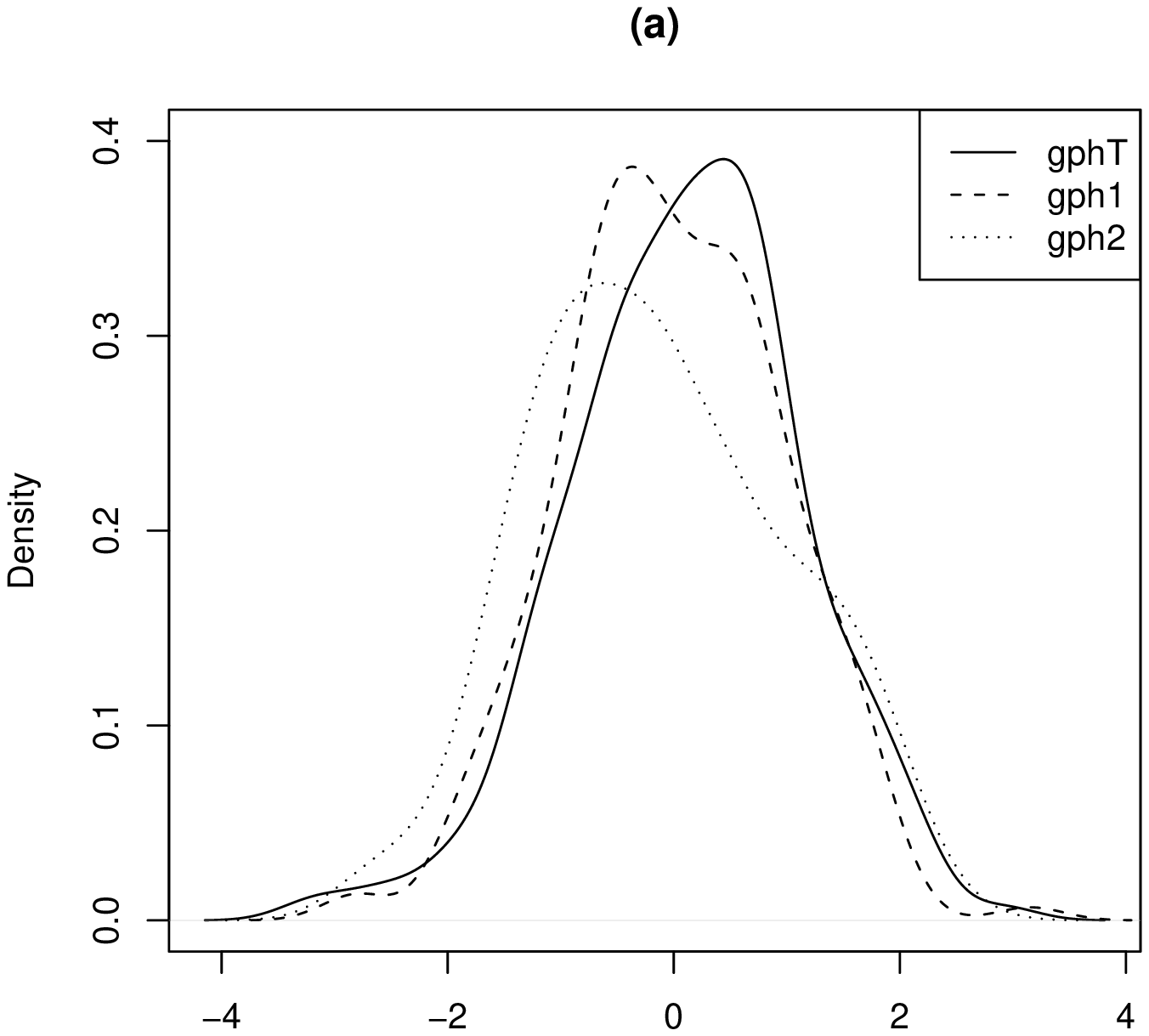}& \includegraphics[width=6.0cm,height=4.5cm]{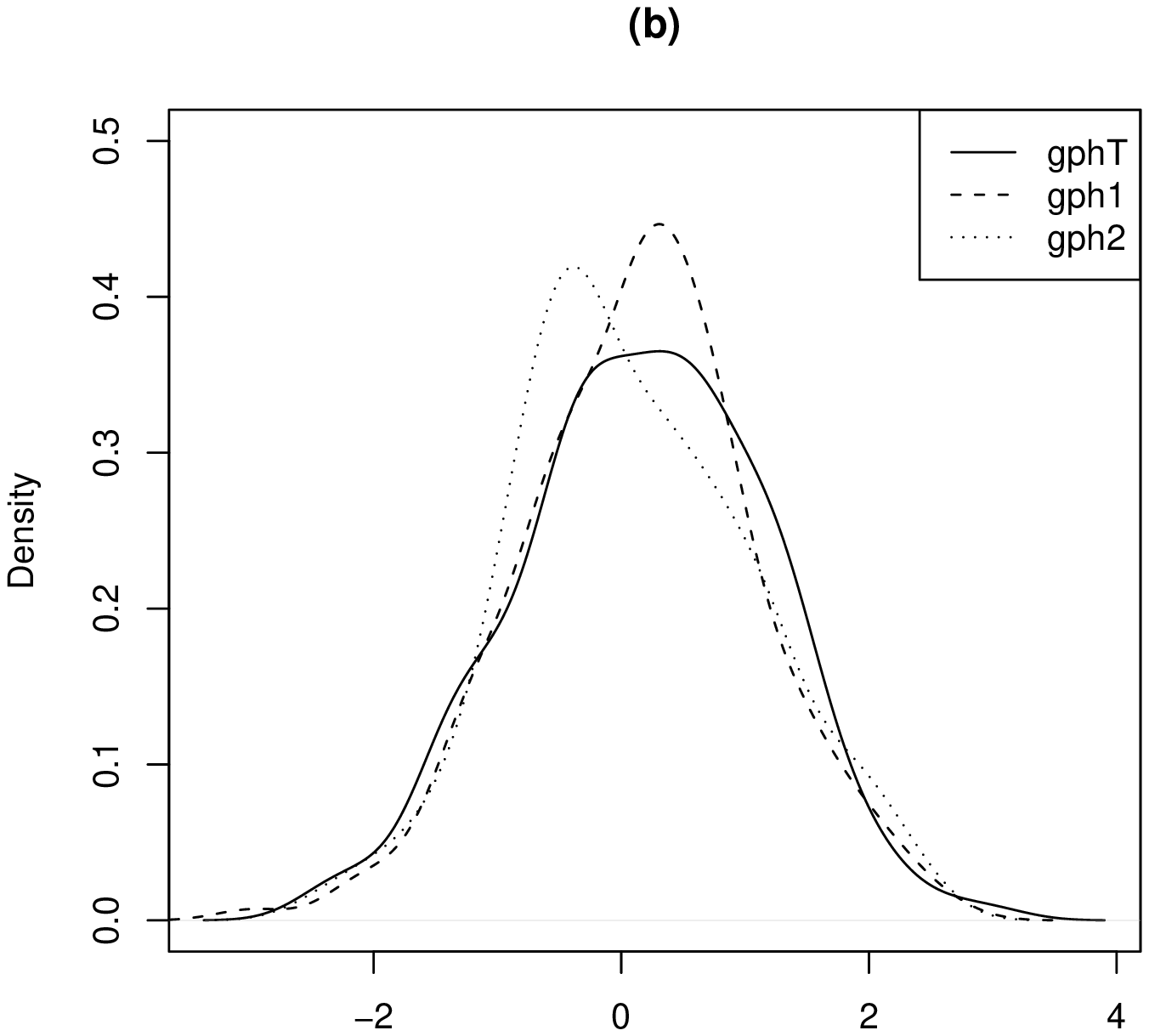}\\
\end{tabular}
\caption{ Empirical densities of the standardized GPH estimates of   $d_1$ (a) and $d_2$ (b) for the SARFIMA model with $d_1=0.1(s_1=4)$ and  $d_2=0.3(s_2=12)$ .}
\label{figura3}
\end{figure}

\pagebreak

\textit{Model Misspecification }

As previously mentioned, the FT method was included in the study to compare the finite sample property between  semiparametric and parametric approaches. This comparison would be unfairly biased against  the semiparametric estimator proposed here  if the empirical investigation was only based  under correct model specification. Then, the next two tables deal with the estimation of the model under model misspecification of some models considered in Tables \ref{tableS1S4} and \ref{tableS4S12}. The simulated models have AR parts whereas  these short-memory parameters are omitted in the estimated models. Thus, the order misspecification is related to the non specification of the short-memory dynamics in the estimated model. Tables \ref{MisTable1} and \ref{MisTable2} give the FT estimates for the SARFIMA$(0,d_1,0)_{s_1}(0,d_2,0)_{s_2}$ models with periods $s_1=1$ and $s_2=4$ and $s_1=4$ and $s_2=12$, respectively.

In  contrast to the study presented in Tables \ref{tableS1S4} and \ref{tableS4S12}, an order misspecification, however, radically alters the performance of the parametric FT method and the semiparametric GPH here proposed tends to perform significantly better. The FT estimates  are highly biased.  It is not surprising that the significative increase of the bias and  $mse$  are closely  related to omitting  the seasonal or non-seasonal AR  value. For example, in  the first two cases of Table \ref{MisTable1}, the order misspecification is due to the non-seasonal short-dynamic part. It produces a significative positive bias of the non-seasonal memory parameter ($d_1$) whereas the estimate of $d_2$ is much less affected.  The bias and the $mse$ increase significantly when the seasonal or non-seasonal AR parameter  is close to the non-stationary region.  In the second part of Table \ref{MisTable1}, the results are on the contrary to the estimation performance of the vector $\textbf{d}$  observed in the first part of the table. Because the short-term  contributions are now at a seasonal period, the estimate of the memory parameter at $s=4$, $d_2$,  is much more affected than  $d_1$.

In Table \ref{MisTable2}, the slowly decaying autocorrelations are at period lags $s_1$ = and $s_2$ =12.   The performance of the parametric FT method is very similar to that previously considered. The biases of the seasonal memory parameters are directly related to the period  and magnitude of the AR seasonal and non-seasonal coefficients.



\begin{table} [!ht]
\centering\caption{Results of FT estimates of SARFIMA$(0,d_1,0)_{s_1}(0,d_2,0)_{s_2}$ models. The true SARFIMA models have the parameters $d_1=0.1$ ($s_1=1$), $d_2=0.3$ ($s_2=4$) and $\phi_s, \ s=1,4$, $n=1080$. }
\label{MisTable1}
\footnotesize{
\begin{center}
\begin{tabular}{c|c|c|c|c|c}  
\hline
$\phi_s$&\multicolumn{2}{c|}{$\hat{d}_1$}&Corr.&\multicolumn{2}{c}{$\hat{d}_2$} \\
         \cline{2-3}               \cline{5-6}
     &     mean&mse         & &  mean&mse          \\
\hline
$\phi_1=0.3$&0.3270 &0.0523&$-$0.1504&0.2404&0.0043   \\ \hline
$\phi_1=0.8$&0.8503 &0.5638&$-$0.0553&0.2193&0.0073   \\ \hline
\hline
\hline
$\phi_4=0.3$ &0.0990 &0.0008&$-$0.1641&0.5193&0.0488    \\ \hline
$\phi_4=0.8$&0.1063 &0.0014&$-$0.4924&1.0398&0.5485   \\ \hline
\end{tabular}
\end{center}}
\end{table}

\begin{table} [!ht]
\centering\caption{Results of FT estimates of the SARFIMA$(0,d_1,0)_{s_1} (0,d_2,0)_{s_2}$ models. The true SARFIMA models have the parameters $d_1=0.1$ ($s_1=4$), $d_2=0.3$ ($s_2=12$) and $\phi_s, \ s=1,4\ and \ 12$, $n=1080$. }
\label{MisTable2}
\footnotesize{
\begin{center}
\begin{tabular}{c|c|c|c|c|c}  
\hline
$\phi_s$&\multicolumn{2}{c|}{$\hat{d}_1$}&Corr.&\multicolumn{2}{c}{$\hat{d}_2$} \\
         \cline{2-3}               \cline{5-6}
              &     mean&mse         & &  mean&mse          \\
\hline
$\phi_1=0.3$ &0.1079 &0.0010&$-$0.1582&0.2600&0.0026   \\ \hline
$\phi_1=0.8$&0.4344 &0.1146&$-$0.0411&0.1718&0.0184   \\ \hline
\hline
\hline
$\phi_4=0.3$&0.3355 &0.0562&$-$0.2795&0.1968&0.0118    \\ \hline
$\phi_4=0.8$ &0.8647 &0.5855&$-$0.1354&0.1748&0.0168   \\ \hline
\hline
\hline
$\phi_{12}=0.3$&0.1059 &0.0007&$-$0.0739&0.5068&0.0435      \\ \hline
$\phi_{12}=0.8$&0.1158 &0.0018&$-$0.5797&1.0035&0.4962   \\ \hline
\end{tabular}
\end{center}}
\end{table}

\pagebreak

\vspace{0.3cm}
\section{ Examples of Application}
\vspace{0.3cm}

This section illustrates the usefulness  of the SARFIMA model and the semiparmetric fractional estimator using three examples. The first two examples are artificial series and the third example consists of the analysis of  daily average PM$_{10}$ concentrations.


\subsection{Artificial data }

Samples, with sizes $n=1080$, from  models SARFIMA$(0,d_1,0)_{s_1} (0,d_2,0)_{s_2}$ (Model I) and SARFIMA$(1,d_1,0)_{s_1} (0,d_2,0)_{s_2}$ (Model II), with $d_1 =0.1 (s_1 = 4)$, $d_2=0.3 (s_2 = 12)$ and  $\phi_4 = 0.3$, were simulated according to the data generating process described in the previous section. The non-zero AR parameter is the only  factor that differentiates  the two models. So, the influence of a short-memory parameter in the estimation of the fractional memory parameters of a single series is the main purpose of the analysis of these artificial data sets.   The sample autocorrelation functions (ACF) are in Figures \ref{figuraMoldelIandII}(a) and (b) for Models I and   II, respectively, and the estimates of the models are in Table \ref{EstMoldelsIandII}.


\begin{figure}[!ht]
\begin{tabular}{cc}
\includegraphics[width=6.0cm,height=4.5cm]{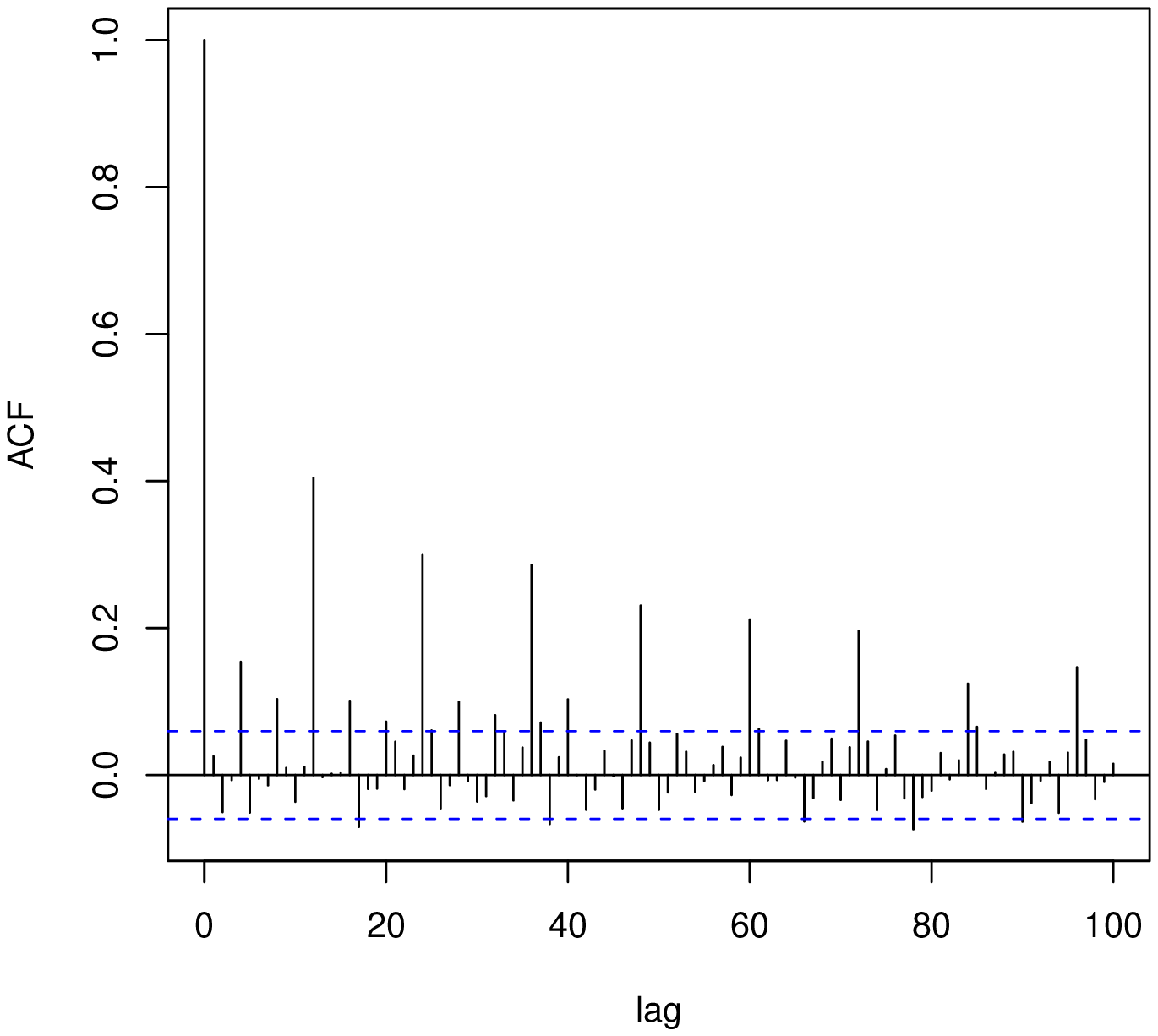}
& \includegraphics[width=6.0cm,height=4.5cm]{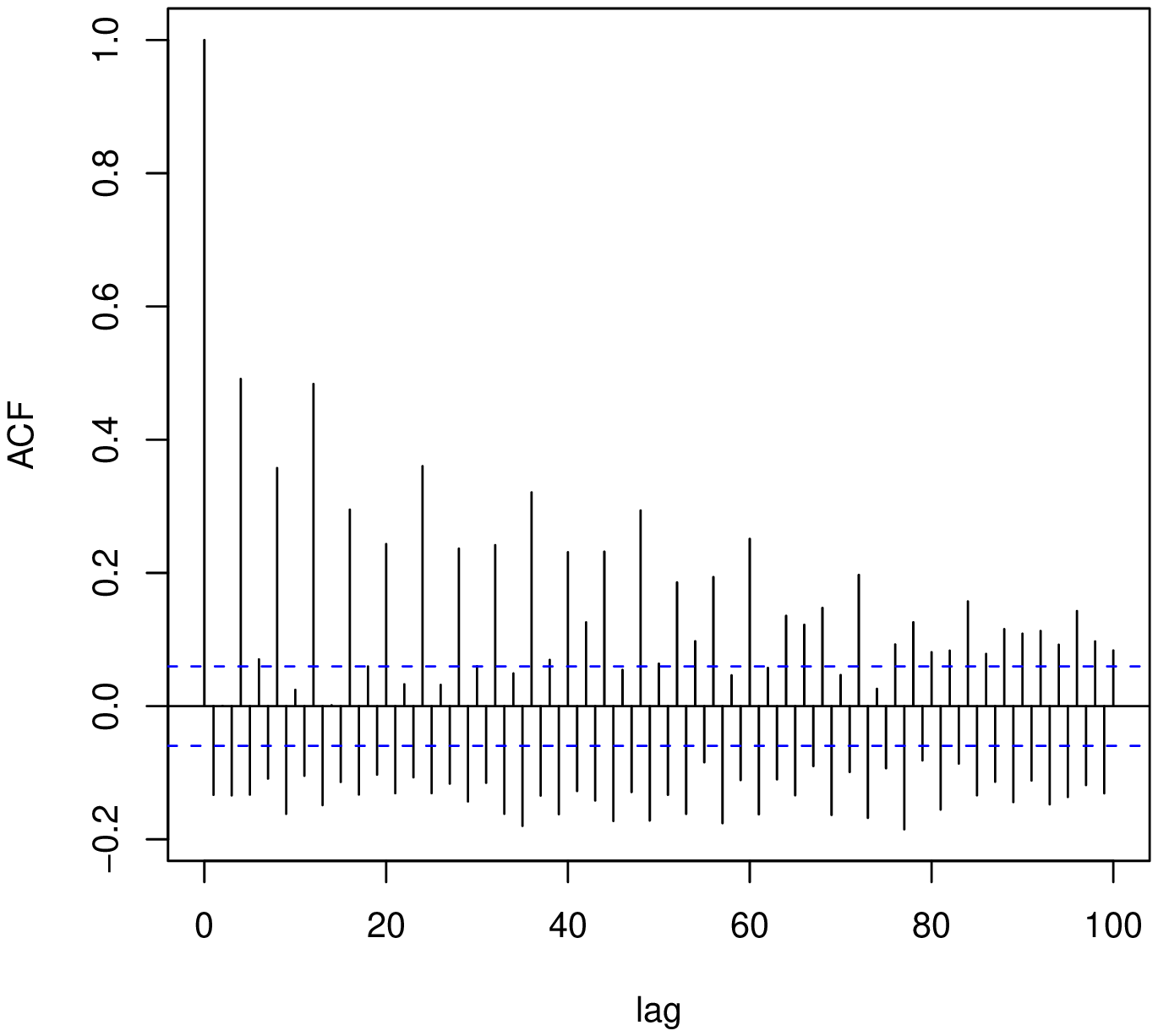}\\
\end{tabular}
\caption{ (a) Sample ACF of Model I (b) Sample ACF of Model II.}
\label{figuraMoldelIandII}
\end{figure}

%


Model I does not have the AR part.  As  expected,  the ACF only has    significant  spike at lags which are multiples of 4 and 12, and they appear to have a slow decay pattern (see Figure \ref{figuraMoldelIandII}a). The seasonal autocorrelations related to the fractional parameter  $d_1$ $(s_1 = 4)$ only become  insignificant  after lag 40. So,  if the data had been analyzed with no prior information, inspection of the sample autocorrelation would indicate that the series has  seasonal periods $s= 4$ and 12 with possible long-memory structure. As can be seem in Figure  \ref{figuraMoldelIandII}b, even though  both models have the same seasonal long-memory parameters, an introduction of a positive AR coefficient may produce a significant impact on the correlation structure. The ACF of Model II (Figure \ref{figuraMoldelIandII}b) also shows  slow decaying behavior at seasonal periods, however, with a stronger correlation structure  at and  between the seasonal periods than  Model I and, in general, the patterns of this ACF are more complicated.

Apart from the usual steps for the identification of  a single long-memory time series, the estimation of the fractional parameter based on different choice of the bandwidth may be an additional tool  when using semiparametric approaches to estimate the fractional parameters. This is exemplified in Table \ref{EstMoldelsIandII}. The estimates of the memory parameters were calculated using different bandwidths $m=n^\alpha$, $\alpha =0.35 (0.3),.., max$, where $m$ satisfies the condition previously stated. For each $\alpha$, the corresponding total number of frequencies used in the regression is given in parenthesis. Table \ref{EstMoldelsIandII} also displays the estimates and their empirical  $mse$-the square of the bias plus the OLS variance. The smallest value of  $mse$ is given in bold.

\begin{table} [!ht]
\centering\caption{Estimated parameters of Models I and II. }
\label{EstMoldelsIandII}
\begin{center}
\small{
\begin{tabular}{c|c|c|c|c|c|c|c|c}  
\hline
$\phi_s$&Estim.&\multicolumn{7}{c}{$\alpha$}
               \\\cline{3-9}
  & &  0.35  &0.38   &0.41&0.44&0.47& 0.50& 0.54 ( max.) \\
               & &  (132)  &(174)   &(210)& (258)&(318)& (390)& (534) \\
\hline
                       &$\hat d_1$        &0.1483 & 0.1407 & 0.1185 & 0.1208 & 0.1084 & 0.1112 & 0.1241     \\
$\phi_4=0.0$           &$\hat d_2$        &0.4117 & 0.3700 & 0.3518 & 0.3284 & 0.3188 & 0.3021 & 0.3133     \\
                       &mse($\hat d_1$)    &0.0257 & 0.0189 & 0.0135 & 0.0102 & 0.0076 & 0.0058 & \bf{0.0023}     \\
                       &mse($\hat d_2$)   &0.0205 & 0.0108 & 0.0073 & 0.0043 & 0.0032 & 0.0022 & \bf{ 0.0017}     \\ \hline

\hline

                       &$\hat d_1$        &0.1492 & 0.1396 & 0.1914 & 0.0997 & 0.0887 & 0.1449 & 0.2535     \\
$\phi_4=0.3$           &$\hat d_2$        &0.3138 & 0.3715 & 0.3139 & 0.3187 & 0.3199 & 0.3144 & 0.2758     \\
                       &mse( $\hat d_1$)    &0.0293 & 0.0229 & 0.0241 & 0.0115 & \bf{0.0084} & 0.0086 & 0.0255     \\
                       &mse ($\hat d_2$)    &0.0094 & 0.0125 & 0.0057 & 0.0044 & 0.0035 & 0.0028 & \bf{0.0023}     \\ \hline

\end{tabular}
}
\end{center}
\end{table}

\begin{figure}[!ht]
\begin{tabular}{cc}
\includegraphics[width=6.0cm,height=4.5cm]{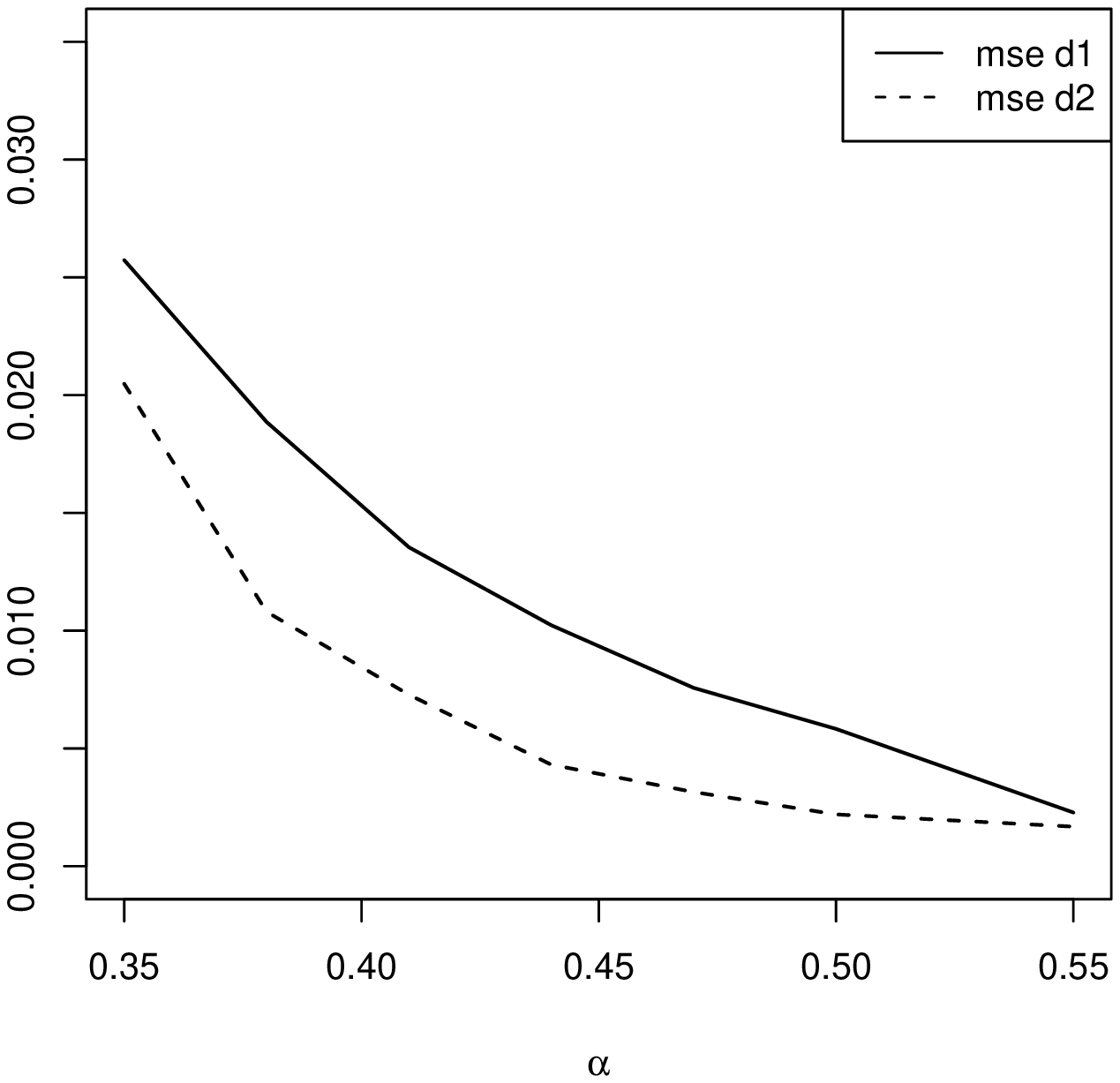}
& \includegraphics[width=6.0cm,height=4.5cm]{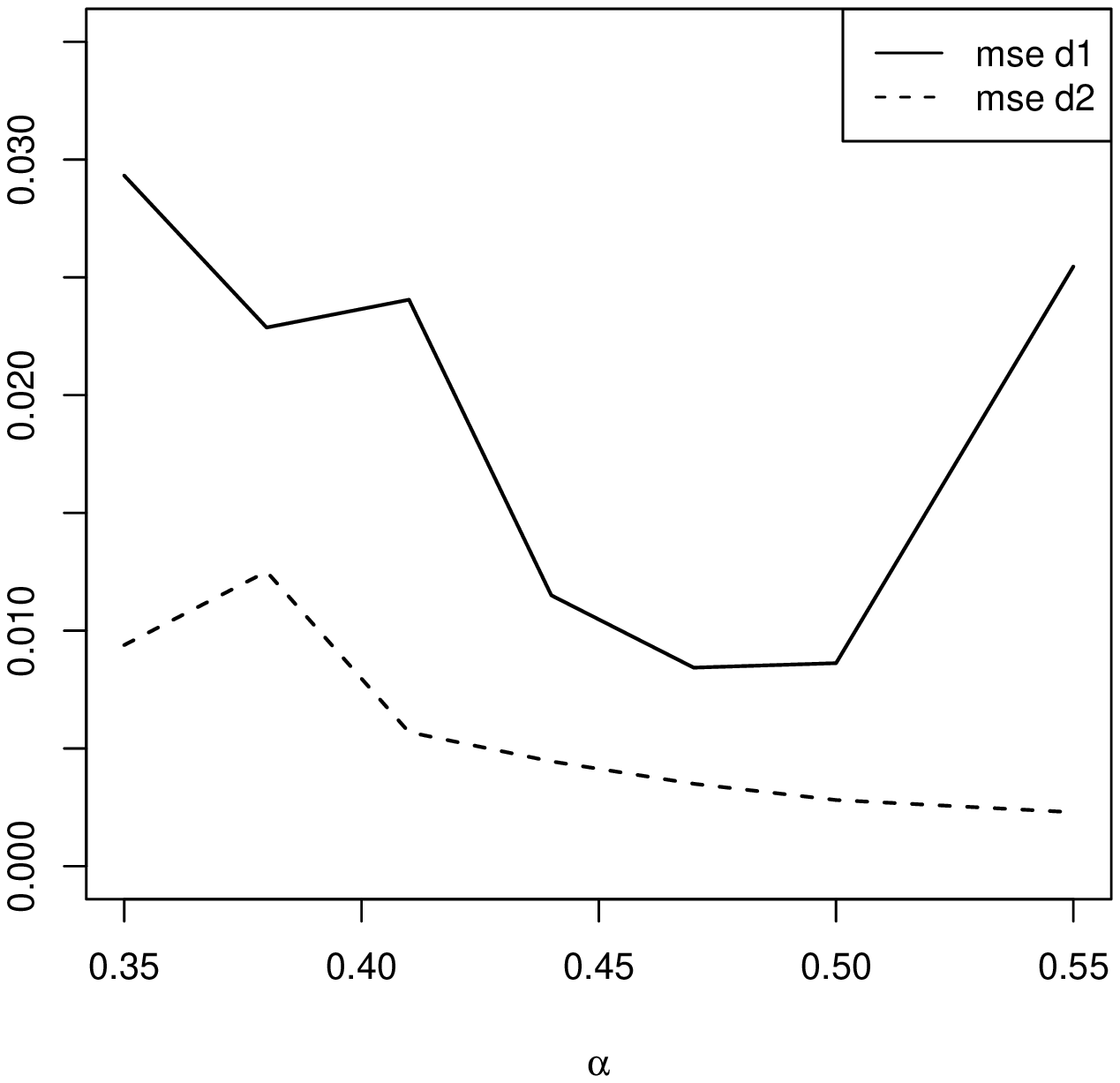}\\
\end{tabular}
\caption{ (a) $mse$'s estimates of the fractional parameters  of Model I (b)  $mse$'s estimates  of the fractional parameters  of Model II.}
\label{figuraMSEmodelIandII}
\end{figure}

 As was expected, the presence of seasonal and nonseasonal short-memory components may bias the estimates of the fractional parameters, which is in accordance with the simulation results presented in Section 4. The bias may be reduced by an appropriate choice of the bandwidth, however, it is not easy task  in a real practical application. So, looking at the behavior of the estimates across different bandwidth values may, at least, indicate  the unfavorable estimates. In the samples here considered, it can be observed that when there is no short-memory part (Model I), the estimates are, in general, very stable across the bandwidth. The reduction of the variance and, also, the $mse$ is obtained by increasing $m$.

  In Model II   the seasonal period of the short-memory is  $s=4$, so  the corresponding fractional estimate is more affected, whereas the fractional estimate of period $s=12$ remains more stable in a wide range of bandwidths. Now, the smallest $mse$ of the estimates are not at the same bandwidths. Since the estimate of $d_1$ is more affected with the AR part, its estimate has the smallest $mse$ with a smaller number of regression than the estimate of $d_2$. To have a better understanding of the behavior of the $mse$ across the size of the bandwidth, these values are displayed in Figure \ref{figuraMSEmodelIandII}.
  So, from this simple example it can be seen that the bias of each fractional estimate may be substantially affected, if at the same seasonal period, there is a short-memory component. As an example of a stronger correlation AR seasonal structure,  a SARFIMA model with   $\phi_4 =0.8$ was also considered, but it is not presented here to save space. As  expected, the smallest $mse$ of both seasonal fractional estimates were achieved for bandwidths smaller than the case where  $\phi_4 =0.3$.

  To conclude,  in Table \ref{EstMoldelsIandII} the estimates based on  the smallest $mse$ were used to estimate the fractional parameters of the artificial series. The model adequacies for the adjusted models  were carried out, for example,  the residual analysis. These evidenced  that the estimated models fit  the series well, that is, no anomaly of the residuals were found (residual analysis of the artificial data are available upon-request).

  In the same direction of the above exercises, other single series were also analyzed with different short-memory parameters and seasonal periods, however, in general, the estimates presented similar patterns of those here presented. These are available upon-request.

\subsection{Daily average PM$_{10}$ concentration}

The daily average Particulate Matter (PM$_{10}$) concentration  is   expressed in $\mu$g/m$^3$ and it was  observed in the
Metropolitan Region of Greater Vitória (RGV) in Brazil.  RGV  is comprised of five  cities with a population of approximately 1.7 million inhabitants in an area of 1,437 $km^2$. The region is situated on the South Atlantic coast of Brazil (latitude 20°19S, longitude 40°20W) and has a tropical humid climate, with average temperatures ranging from $24^{o}C$ to $30^{o}C$. The rainfall is fairly distributed throughout the entire year (average precipitation of 98.3 \textit{mm} per month during the period of study), but with drier periods from June to August (average precipitation of 60.8 \textit{mm} per month) and more heavier precipitation from October to January (average precipitation of 158.3 \textit{mm} per month).

The raw series has a sample  size of  2037 observations, measured from the   1st of January 2001 to 2nd of August 2006,  and it is shown graphically in   Figure \ref{PM10series}. The sample autocorrelation (ACF)   and partial autocorrelation (PACF)
functions   are shown in Figures \ref{sampleACFPACFPM10} (a) and (b), respectively. From these plots   a strong seasonal component in the series is evident, which was an expected   property  due to the characteristic of such a physical phenomena. It is also observed that the seasonality behavior has period $s=7$, which is also an expected data behavior since the series was observed daily.

An interesting feature observed from the sample ACF  is the   slow decay of the correlations in the first lags, in the lags multiple of 7 and in the lags  between the seasonal periods. The ACF plot strongly indicates that the process has fractional  memory parameters in the lung-run and in the seasonal periods. This empirical evidence indicates the use of a particular case of the SARFIMA model defined previously (Model \eqref{model7}) with $s_1=1$, $s_2 =7$. The modeling strategy   follows the  same steps suggested in Hosking (1981) and investigated empirically by Reisen (1994), Reisen \& Lopes (1999) among others. Firstly, the fractional  parameters   are  estimated by using the GPH semiparametric tool described in the previous section.  This was carried out by using different sizes of  bandwidth $m$. Secondly, the truncated filter
$(1-B)^{\hat{d_1}}(1-B^s)^{\hat{d_2}}$ is used to filter the observation and obtain a new series which approximately follows an ARMA model. This new series
is used to achieve the complete short-memory model structure.


\begin{figure}[!ht]
\centering
\begin{tabular}{cc}
\includegraphics[width=10.0cm,height=7.0cm]{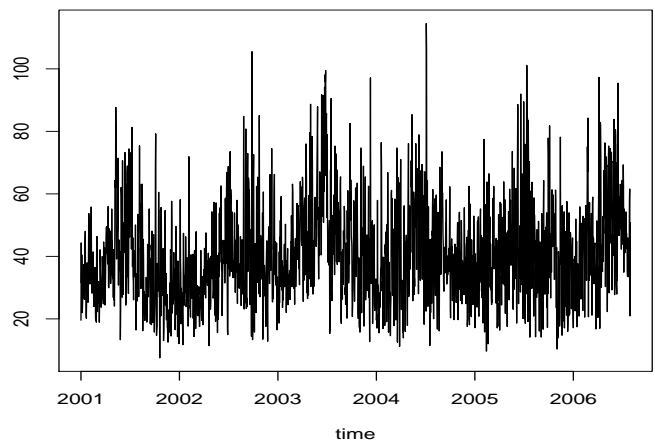}
\end{tabular}
\caption{ $PM_{10}$ series.}
\label{PM10series}
\end{figure}

\begin{figure}[!ht]
\begin{tabular}{cc}
\includegraphics[width=6.0cm,height=4.5cm]{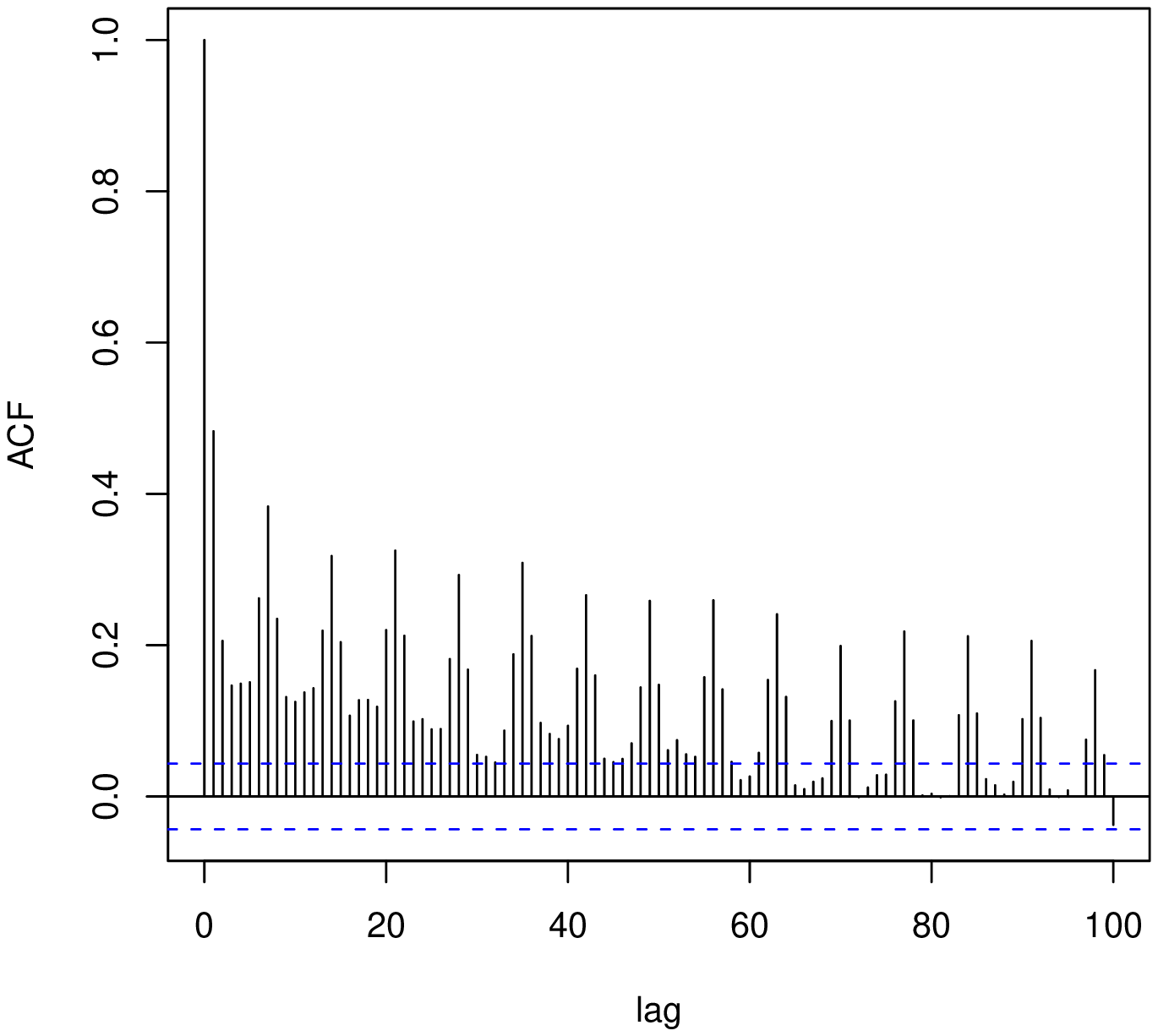}
& \includegraphics[width=6.0cm,height=4.5cm]{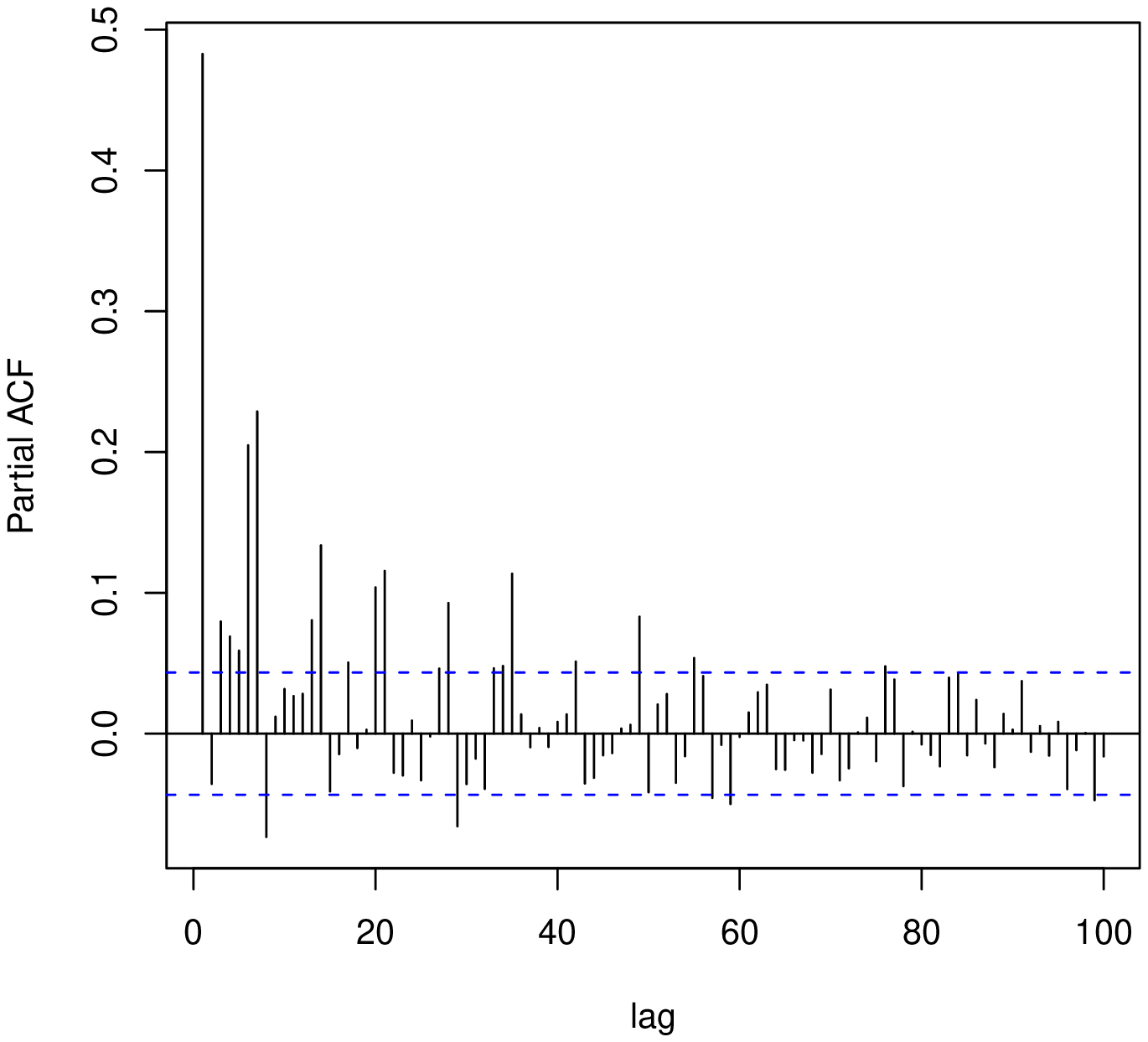}\\
\end{tabular}
\caption{ (a) Sample ACF of $PM_{10}$ (b)  Sample PACF of $PM_{10}$.}
\label{sampleACFPACFPM10}
\end{figure}

%


Table \ref{Estimatesd1d2PM10} displays the results of the memory estimates
obtained from different values of the  bandwidth $m=n^\alpha$,
$0<\alpha<1$ and the values in parenthesis are the corresponding   number of the frequency used in the regression.  From this table, it can be seen that the values of the estimates of $d_1$ and $d_2$ are stable for $ 0.52 <\alpha< 0.65(max.)$, and  they are in the range $0< d_1,d_2 <0.5$. The estimated standard errors of $\hat{d_2}$ are relatively small and two-sided confidence intervals for $d_1$ and $d_2$ are correspondingly tight. Therefore, for  $ \alpha > 0.52$ the null hypotheses that $H_0: d_1=0$ and $H_0: d_2=0$ are rejected. Also, for all values of the bandwidths given in the table,  F test was performed for the  null hypothesis  $H_0: \textbf{d}=\textbf{0}$, and it indicated that at least one fractional parameter is different from zero.  The stable value of the estimate of $d_1$  in the the range $ 0.52 <\alpha< 0.65(max.)$ gives an empirical evidence  that  if there is any non-seasonal short-memory part in the model, the parameter is not large enough to make a significant contribution in the regression estimators. A similar conclusion is also observed  in the case of the seasonal fractional estimate $\hat{d_2}$.
Therefore,  $\alpha=0.54$ was chosen to estimate the memory parameters. The vector  $\hat{\textbf{d}}=(0.1918,\ 0.1798)$ shows that data presents the  stationarity,  invertibility and long-memory properties. The choice of these estimates was also confirmed by the Akaike Criterion (AIC), which gave the smallest value for $\alpha=0.54$.

%

\begin{table} [!ht]
\centering\caption{Estimates of the fractional parameters of the $PM_{10}$.}
\label{Estimatesd1d2PM10}
\begin{center}
\small{
\begin{tabular}{c|c|c|c|c|c|c}  
\hline
Estim.            &\multicolumn{6}{c}{$\alpha$}
                  \\ \cline{2-7}
                  &  0.52  &  0.54  &  0.56   &  0.58  &  0.6   & 0.65(max.)   \\
                  & (361)    & (424)    & (494)    & (578)    & (669)    & (1016)\\
\hline

$\hat{d_1}$          & 0.1004 & 0.1918 & 0.1787 & 0.2071 & 0.1645 & 0.2534   \\
$\hat{d_2}$          & 0.1954 & 0.1798 & 0.1575 & 0.1443 & 0.1548 & 0.0806   \\
Var $\hat{d_1}$      & 0.0110 & 0.0090 & 0.0072 & 0.0060 & 0.0049 & 0.0014   \\
Var $\hat{d_2}$      & 0.0016 & 0.0013 & 0.0011 & 0.0009 & 0.0008 & 0.0002   \\
\hline
\end{tabular}
}
\end{center}
\end{table}


  To obtain the approximation of the model  $\nu_t$
  , the observations were filtered by
  $\nabla^{\hat{\textbf{d}}}$ truncated at $n$=2037. The new series is
  $\hat{\nu}_t=\sum_{j=0}^{n}\hat{\pi}_j^*X_{t-j}$, where
  $\hat{\pi}_j^*$, $j=1,2,\ldots,2037$, are the  estimated coefficients
 of the AR$(\infty)$ representation of a SARFIMA$(0,d_1,0)(0,d_2,0)_7$ model. As an example to verify the impact   of  $X_j$, for large $j$,    in the AR infinite representation, the  $\hat{\pi}_{731}^*$  is $\approx$ $10^{-6}$, which is nearly zero. Since the observations are in scale of $10^{2}$, the contribution of  $X_{j}$  becomes  negligible for large $j$.

Figures \ref{sampleACFPACFARMAPM10}(a) and (b) present the sample
autocorrelation and  partial autocorrelation functions of $\hat{\nu}_t$,
respectively. These plots possibly indicate that an MA(1) model may be
adequate to describe   $\hat{\nu}_t$. However, the MA estimate does not seem to  have significant value. This is in accordance with the stable values of the memory estimates given in  Table \ref{Estimatesd1d2PM10}.

\begin{figure}[!ht]
\begin{tabular}{cc}
\includegraphics[width=6.0cm,height=4.5cm]{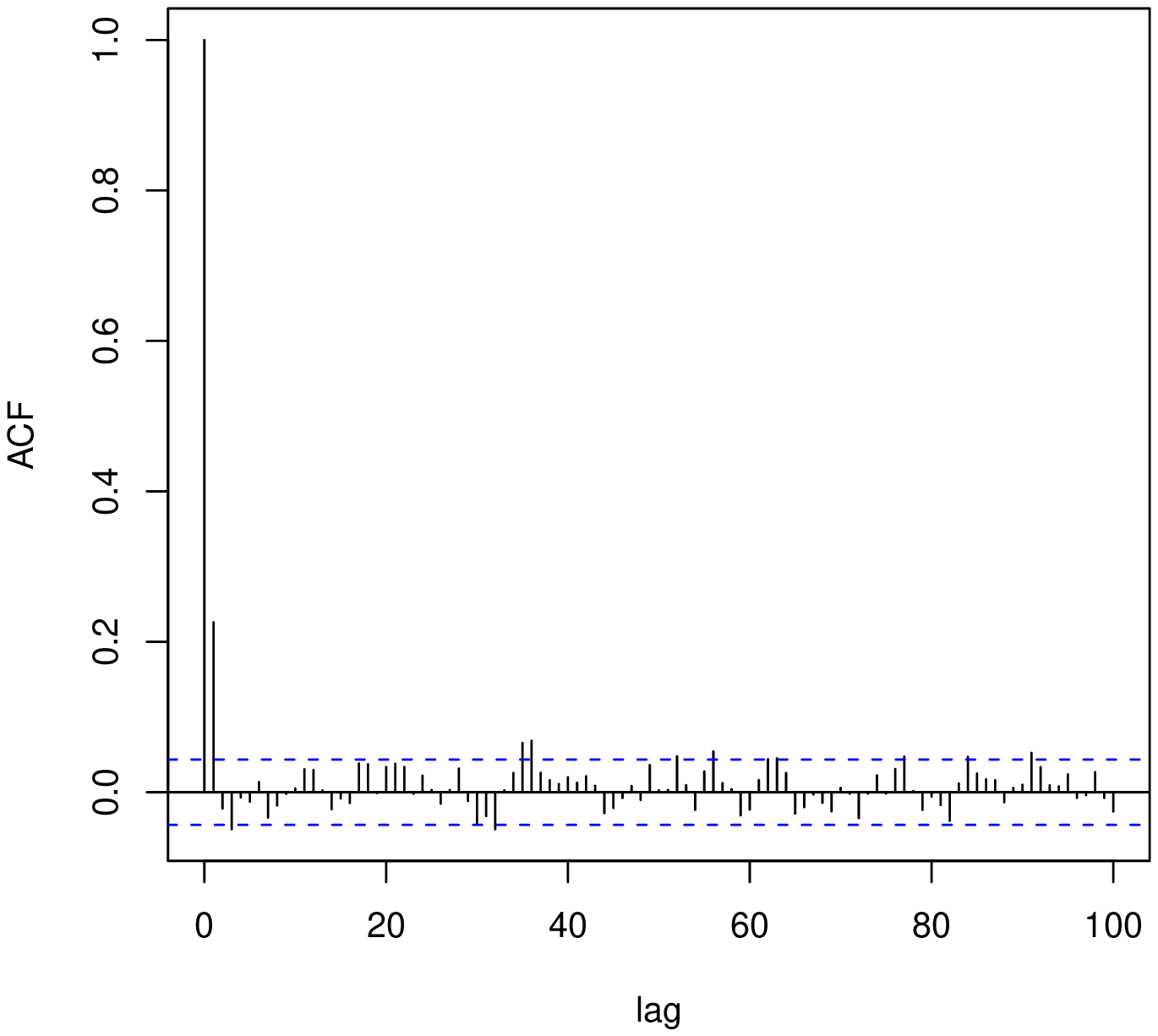}
& \includegraphics[width=6.0cm,height=4.5cm]{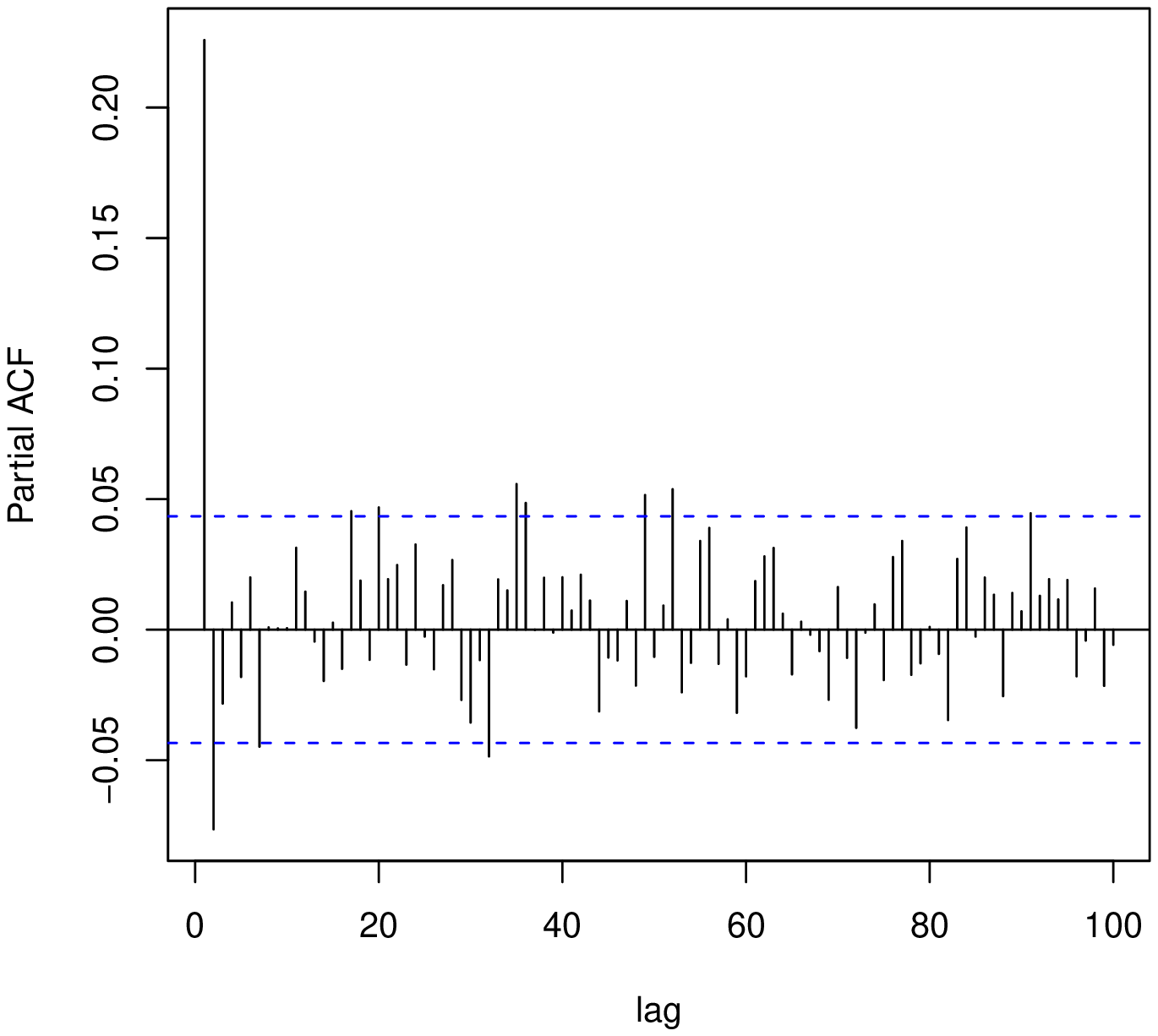}\\
\end{tabular}
\caption{ (a) ACF of $\hat{\nu}_t$ (b)   PACF of $\hat{\nu}_t$.}
\label{sampleACFPACFARMAPM10}
\end{figure}


To identify the model's order of  $\hat{\nu}_t$, the AIC Criterion  was used which suggested an MA(1) model. Therefore, the  model SARFIMA$(0,d_1,1)\times(0,d_2,0)_7$ with $\hat{\theta} = -0.2673$ ($ sd = 0.0214)$ was chosen for the $PM_{10}$ average data. The standard residual analysis did not present  anomaly of the residuals ( the results are upon-request). Most of the correlations of the residuals fall inside the confidence boundaries ( figures available upon-request).


\section{Conclusions}
The paper deals with the seasonal ARFIMA model with two seasonal
fractional parameters. Properties  and model estimation are
discussed. To estimate the parameters,  a multilinear regression
method is   used.   A parametric estimator was also considered for
empirical comparison. The Monte Carlo experiment evidenced that,
in general, all methods gave good estimates and they were very
competitive. The estimators presented very good accuracy for sample size
equal to 1080. The method is very easy to be implemented
and does not require  sophisticated computer capacities. The usefulness  of the SARFIMA model and the semiparmetric fractional estimator was exemplified using  artificial  and a  daily average PM$_{10}$ concentration series.

\section*{Acknowledgements}
V. A. Reisen and B. Zamprogno gratefully acknowledge partial financial support from PIBIC-UFES and
CNPq/Bra\-zil. W. Palma was partially supported by Fondecyt grant number
 1085239. J. Arteche acknowledges financial support from the Spanish Ministerio de Ciencia y Tecnología
 and ERDF grant SEJ2007-61362/ECON. The authors thank  Prof. Liudas Giraitis for useful suggestions on the  theory of the seasonal model.  Part of this work was done during the period that V.  A. Reisen was visiting Centre
 D'Economie de la Sorbonne, Paris. He thanks Prof. D. Guegan for her very kind invitation.

\newpage
\section*{APPENDIX A}
{\bf Proof of Theorem 1:} The asymptotic mean and variance of $\hat{d}$ follows as in Theorem 1 in Hurvich et al (1998). Denote $Z$ the $\sum_km_k\times 2$ matrix with the regressors in (\ref{model9}) and similarly the vector  $V$ for the disturbances. Then
\[\hat{d}-d=(Z'Z)^{-1}Z'V.\]
Note now that
\begin{eqnarray*}
2\log X_{1,k,j}&=&2\log\left\{2 \sin\left(\pi k+\frac{\pi j s'}{n}\right)\right\}=2\log|\la_{js'}|+ O(|\la_{js'}|^2) \\
2\log X_{2,k,j}&=&2\log\left\{2 \sin\left(\frac{\pi k s_2}{s'}+\frac{\pi j s_2}{n}\right)\right\}=2\log|\la_{js_2}|+ O(|\la_{js'}|^2)
\end{eqnarray*}
if $k\in I_k$ and
\begin{eqnarray}
2\log X_{2,k,j}&=&2\log \left\{2\sin\left(\frac{\pi k s_2}{s'}\right)\left[1+\frac{\cos(\pi k s_2/s')}{\sin(\pi k s_2/s')}
\frac{\pi |j| s_2}{n}+O(\la_{js_2}^2)\right]\right\} \label{eq7}
\end{eqnarray}
if $k\not\in I_k$. Then
\begin{eqnarray*}
\sum_{k=0}^{[s'/2]}\sum_j^*Z_{1,k,j}^2&=&4\sum_{k=0}^{[s'/2]}m_k (1+o(1)) \\
\sum_{k=0}^{[s'/2]}\sum_j^*Z_{2,k,j}^2&=&4\sum_{k\in I_k}m_k (1+o(1))\\
\sum_{k=0}^{[s'/2]}\sum_j^*Z_{1,k,j}Z_{2,k,j}&=&4\sum_{k\in I_k}m_k (1+o(1))
\end{eqnarray*}
This result follows from the fact that for those $k\not\in I_k$, $\sum_j^*Z_{2,k,j}^2=O(m^3n^{-2})=o(m)$ and $\sum_j^*Z_{2,k,j}Z_{1,k,j}=O(m^2n^{-1}\log m)=o(m)$. Then
\begin{equation}
Z'Z=mQ(1+o(1))\label{eq5}
\end{equation}

Denoting now $\ep$ the vector with elements $\ep_{k,j}$ we have that
\begin{equation}
Z'\ep =mb_n (1+o(1))\label{eq6}
\end{equation}
since
\begin{eqnarray*}
\sum_{k=0}^{[s'/2]}\sum_j^*Z_{1,k,j}\ep_{k,j}&=&-2m\sum_{k=0}^{[s'/2]}b_k\delta_k(2\pi)^{\al_k}\frac{\al_k}{(\al_k+1)^2}
\left(\frac{m}{n}\right)^{\al_k}\left(1+O\left[\log m\left(\frac{m}{n}\right)^{\iota}\right]\right) \\
\sum_{k=0}^{[s'/2]}\sum_j^*Z_{2,k,j}\ep_{k,j}&=&-2m\sum_{k\in I_k}b_k\delta_k(2\pi)^{\al_k}\frac{\al_k}{(\al_k+1)^2}
\left(\frac{m}{n}\right)^{\al_k}\left(1+O\left[\log m\left(\frac{m}{n}\right)^{\iota}\right]\right)
\end{eqnarray*}
because for $k\not\in I_k$
\begin{eqnarray*}
\sum_{k\not\in I_k}\sum_j^*Z_{2,k,j}\ep_{k,j}&=&O\left(\sum_j^*|\la_j|^{\al_k+1}\right)=o\left(m\left[\frac m n\right]^{\alpha_k}\right)
\end{eqnarray*}
The rest of the proof follows as in Hurvich et al (1998).

{\bf Proof of Theorem 2:} The proof follows as in Hurvich et al (1998) applying Lemma 4 in Sun and Phillips (2003) which holds in our multiple log periodogram regression context. Since
\[\sqrt{m}(\hat{d}-d)=(m^{-1}Z'Z)^{-1}m^{-1/2}Z'\ep+(m^{-1}Z'Z)^{-1}m^{-1/2}Z'U\]
by using (\ref{eq5}) and (\ref{eq6}) it only remains to show that $m^{-1/2}v'Z'U\stackrel{d}{\ri} N(0, \pi^2 v'Qv/6)$ for any vector $v=(v_1,v_2)$. As in Hurvich et al. (1998)
\[\frac{1}{\sqrt{m}}v'Z'U=o_p(1)+\frac{1}{\sqrt{m}}\sum_{k=0}^{[s'/2]}\sum_{|j|>m^{0.5+\delta}}^{*}g_{k,j}U_{k,j}\]
for some $0.5>\delta>0$ and $g_{k,j}=v_1Z_{1,k,j}+v_2Z_{2,k,j}$. Now $\max_{j,k}|g_{k,j}|=O(\log m)$ and
$\sum_{|j|>m^{0.5+\delta}}^{*}|g_{k,j}|^p=O(m)$ for all $p\geq 1$ (see formula (A18) in Hurvich et al (1998) for
$Z_{1,k,j}$ and $Z_{2,k,j}$ for $ks_2\in I_k$ and use (5) for $ks_2\not \in I_k$). Since by equation (\ref{eq5})
$\sum_{|j|>m^{0.5+\delta}}^{*}g_{k,j}^2=mv'Qv(1+o(1))$, we can apply Lemma 4 in Sun and Phillips (2003) to get the desired result.

\end{document}